\newtheorem{theorem}{\bf Theorem}
\newtheorem{proposition}{\bf Proposition}
\newtheorem{lemma}{\bf Lemma}
\newtheorem{definition}{\bf Definition}
\newcommand*\dif{\mathop{}\!\mathrm{d}}
\newcommand\norm[1]{\left\lVert#1\right\rVert}
\begin{document}	
\vspace{-0.3cm}
\title{\LARGE Predictive Deployment of UAV Base Stations in Wireless Networks: Machine Learning Meets Contract Theory
	\footnote{A preliminary version of this work appears in the proceedings of IEEE GLOBECOM 2018 \cite{zhang2018machine}. This research was supported by the U.S. National Science Foundation under Grant IIS-1633363.}}
	\vspace{-0.4cm}	
	
\author{\IEEEauthorblockN{ Qianqian Zhang$^1$, Walid Saad$^1$,  Mehdi Bennis$^2$, Xing Lu$^3$, M\'erouane Debbah$^{4,5}$, and Wangda Zuo$^{3}$}\vspace{-0cm}	 
	
	\IEEEauthorblockA{\footnotesize	 $^1$Bradley Department of Electrical and Computer Engineering, Virginia Tech, VA, USA,
		Emails: \url{{qqz93,walids}@vt.edu}\\
		$^2$Center for Wireless Communications-CWC, University of Oulu, Finland, Email: \url{mehdi.bennis@oulu.fi}\\
		$^3$Department of Civil, Environmental and Architectural Engineering, University of Colorado Boulder, CO, USA, Email:  \url{ {xing.lu-1,wangda.zuo}@colorado.edu} \\
		$^4$Mathematical and Algorithmic Sciences Lab, Huawei France R\&D, Paris, France, Email: \url{merouane.debbah@huawei.com}\\
		$^5$Large Systems and Networks Group (LANEAS), CentraleSupélec, Université Paris-Saclay,  Gif-sur-Yvette, France  
		\vspace{-1.4cm}		
	}
}

\maketitle

\begin{abstract}   
	In this paper, a novel framework is proposed to enable a predictive deployment of unmanned aerial vehicles (UAVs) as temporary base stations (BSs) to complement ground cellular systems in face of downlink traffic overload. First, a novel learning approach, based on the weighted expectation maximization (WEM) algorithm, is proposed to estimate the user distribution and the downlink traffic demand. Next, to guarantee a truthful information exchange between the BS and UAVs, using the framework of contract theory, an offload contract is developed, and the sufficient and necessary conditions for having a feasible contract are analytically derived. Subsequently, an optimization problem is formulated to deploy an optimal UAV onto the hotspot area in a way that the utility of the overloaded BS is maximized. Simulation results show that the proposed WEM approach yields a prediction error of around $10\%$. Compared with the expectation maximization and k-mean approaches, the WEM method shows a significant advantage on the prediction accuracy, as the traffic load in the cellular system becomes spatially uneven. 
	Furthermore, compared with two event-driven deployment schemes based on the closest-distance and maximal-energy metrics, the proposed predictive approach enables UAV operators to provide efficient communication service for hotspot users in terms of the downlink capacity, energy consumption and service delay. Simulation results also show that the proposed method significantly improves the revenues of both the BS and UAV networks, compared with two baseline schemes.   
		  
\end{abstract}   
{\small \emph{Index Terms} -- cellular networks; UAV deployment; traffic prediction;  contract theory.}
 
\IEEEpeerreviewmaketitle

\section{introduction}

The use of unmanned aerial vehicles (UAVs) as flying base stations (BSs) has attracted growing interest in the past few years \cite{mozaffari2017mobile, zhang2018machine, bor2016efficient, zhang2018fast, khawaja2018survey, mozaffari2018tutorial,mozaffari2018beyond,saad2019vision}. 
UAVs can be deployed to complement the existing cellular systems, by providing reliable wireless services for ground users, to potentially increase the network capacity, eliminate coverage holes,  and cope with the steep surge of communication needs during hotspot events \cite{zhang2018machine}.  
Compared with the terrestrial BSs that are deployed at a fixed location for a long term, UAVs are more suitable for temporary on-demand service  \cite{bor2016efficient}. 
For instance, UAVs can provide communication service for major events (e.g. sport or musical events) during which the terrestrial network capacity is often strained \cite{zhang2018fast}. 
Furthermore,  UAVs can adjust their positions and establish line-of-sight (LOS) communication links towards ground users, thus improving network performance \cite{khawaja2018survey}. 
Due to their broad range of application domains and low cost, UAVs  is a promising solution to provide temporary connectivity for ground users\cite{mozaffari2018tutorial}.    

However, the UAV deployment for on-demand cellular service faces several key challenges.  
For instance,  UAVs are strictly constrained by their on-board energy, which should be efficiently used for communication.  
However, the on-demand deployment requires UAVs to continuously change their positions to meet instant communication requests. Therefore, most of on-board energy can be consumed by mobility, thus limiting their communication capabilities  \cite{zhang2018machine}.  
Moreover, to effectively alleviate network congestion during a hotspot event, the deployed UAV must have enough on-board power to satisfy the downlink communication demand. 
To allocate a qualified UAV with sufficient energy, the network operator should estimate the required transmit power, based on the real-time  traffic load.  
These challenges, in turn, motivate the need for a comprehensive prediction of cellular traffic, and a predictive approach for UAV deployment \cite{chen2019artificial}. 
To this end, machine learning (ML) techniques can be applied to  estimate  the cellular traffic demand within the target system. 
Given the predicted traffic load, each BS can detect hotspot areas and request suitable UAVs to alleviate network congestion.  

Another  challenge of the on-demand deployment for aerial wireless service is to incentivize cooperation between the ground BS and the UAV operators under the asymmetric information.    
As shown in \cite{hu2018uav}, the ground BSs and UAVs can belong to different operators who seek to selfishly maximize their individual benefits.  
Hence, to request a UAV's assistance, a ground BS must offer an appropriate economic reward to the UAV operator for aerial wireless service.   
However, given that the BS  has no prior knowledge of each UAV, there is no guarantee that the requested UAV is able to provide enough transmit power to satisfy the downlink demand.  
Therefore, designing an incentive mechanism is necessary to ensure a truthful information exchange between the UAV and BS systems, when the information among different network operators is asymmetric.   

\subsection{Related Works}

The optimal deployment of UAVs for cellular service has been studied in \cite{mozaffari2016optimal,  kalantari2016number, lyu2018uav}. 
In \cite{mozaffari2016optimal}, the authors studied the optimal locations and coverage areas of UAVs that minimizes the transmit power.  
The work in \cite{kalantari2016number} derived the minimum number of UAVs needed to satisfy the coverage and capacity constraints. 
In \cite{lyu2018uav},  the authors jointly optimized the  UAV trajectory and the network resource allocation  to maximize the throughput to ground users. 
The problem of traffic offloading from an existing wireless network to UAVs  has been addressed in \cite{sharma2016uav, lyu2017spectrum, cheng2018uav, sharafeddine2019demand}. 
In \cite{sharma2016uav}, the allocation problem of UAVs to each geographic area was investigated to improve the spectral efficiency and reduce the delay. 
In \cite{lyu2017spectrum} and \cite{ cheng2018uav}, the authors optimized the trajectory of UAVs  to provide wireless services to the cell-edge users.  
In \cite{sharafeddine2019demand},  an unsupervised learning approach was presented to solve the deployment of a fleet of UAVs for traffic offloading. 
However, most of the existing works \cite{mozaffari2016optimal, kalantari2016number, lyu2018uav, sharma2016uav, lyu2017spectrum, cheng2018uav, sharafeddine2019demand} assumed that the traffic demand of the cellular users is known a priori, which is challenging to estimate in a practical network.  
Furthermore, the works \cite{mozaffari2016optimal, kalantari2016number, lyu2018uav, sharma2016uav, lyu2017spectrum, cheng2018uav, sharafeddine2019demand} optimized the performance of the cellular network in a centralized approach which assumes all UAVs belong to the same entity.  
Given the fact that the UAVs can belong to multiple operators, a new framework is needed to consider the individual utility of UAVs in the aerial communication service, while optimizing the performance of the ground cellular networks.   

Meanwhile, in \cite{li2017learning, yu2017modeling,  valente2017survey}, a number of ML approaches are proposed to predict the traffic demands of cellular networks.  
In \cite{li2017learning}, a prediction framework is proposed to model the cellular data in the temporal and spatial domains.  
The authors in \cite{yu2017modeling} predicted the locations of users during daily activities, based on pattern modeling. 
The work \cite{valente2017survey} provided surveys that focused on the general use of ML algorithms in cellular networks.  
Furthermore, the prior art in \cite{chen2018liquid, chen2017learning, amorim2017machine} studied the use of ML techniques to improve the performance of UAV-aided  communications.  
In \cite{chen2018liquid}, an ML framework based on liquid state machine is proposed to optimize the caching content and resource allocation for each UAV.  
In \cite{chen2017learning}, the authors investigated an ML approach to construct a radio map  for  autonomous path planning of UAVs.  
In \cite{amorim2017machine}, ML algorithms are applied to detect aerial users from the ground mobile users.   
However, most of the works in \cite{li2017learning,yu2017modeling, valente2017survey ,chen2018liquid, chen2017learning,amorim2017machine} aim to build an ML model to predict regular traffic patterns, while hotspot  events are considered as an anomaly and excluded from  these studies.  
In fact, none of the approaches proposed in  \cite{li2017learning,yu2017modeling, valente2017survey ,chen2018liquid, chen2017learning,amorim2017machine} can  effectively identify the hotspot areas or accurately predict excessive traffic load  during the hotspot event.  
Thus,  results of these prior works cannot enable a predictive UAV deployment  for on-demand cellular service to alleviate the traffic congestion. 

\subsection{Contributions}
 
The main contribution of this paper is a novel framework for optimally deploying UAVs to assist a ground cellular network in alleviating its downlink traffic congestion during hotspot events.  
The proposed framework divides the deployment process  into four, inter-related and sequential stages:  learning stage,  association stage,  movement stage, and service stage. 
For each stage, we evaluate the performance of the proposed framework, using an open-source dataset in \cite{cityCellularTrafficMap}. 
Our main contributions include:
\begin{itemize}
\item 
A novel framework, based on the weighted expectation maximization (WEM) approach, is proposed to predict the downlink traffic demand for each cellular system in the learning stage.  
The proposed WEM method is a general version of the conventional expectation maximization (EM) algorithm, which enables a variable weight at each data point in the  distribution modeling. 
In particular, the proposed approach identifies the user distribution, predicts the cellular data demand, and  pinpoints the hotspot areas within the cellular system.    

\item 
In the association stage,  to employ a UAV with sufficient on-board energy to satisfy the downlink demand, 
the framework of contract theory \cite{bolton2005contract} is introduced, where each overloaded BS can jointly design the transmit power and unit reward of the target UAV. 
We analytically derive the sufficient and necessary conditions needed to guarantee a truthful information exchange between the BS and UAV operators. 
The proposed contract approach yields little communication overhead and exhibits a low computational  complexity.  

\item   
Simulation results show that the mean relative error (MRE) of the proposed ML approach is around $10\%$.  
Compared with two baselines, an EM scheme and a $k$-mean algorithm, the proposed method yields a better prediction accuracy, particularly when the  downlink traffic load in the cellular system becomes spatially uneven.  
Furthermore, simulation results show that the designed contract ensures a non-negative payoff of each UAV, and each UAV will truthfully reveal its communication capability by accepting the contract designed for itself.   

\item  
We evaluate the performance of the proposed approach with two event-driven allocation methods, based on the closest-distance and maximal-energy metrics, that deploy a target UAV after the network congestion occurs, without traffic prediction and contract design.  
Numerical results show that the proposed predictive method enables UAV operators to provide efficient downlink service for hotspot users, in terms of  the downlink capacity, energy consumption, and service delay. 
Moreover,  the proposed method significantly improves the economic revenues of both the BS and UAV networks, compared with two   baseline schemes.   
\end{itemize}

The rest of this paper is organized as follows.  
In Section \ref{sec:sm}, we present the system model. The problem formulation is given in Section \ref{sec:pf}. 
In Section \ref{sec:ml}, the ML approach is proposed to predict downlink traffic demands. 
In Section \ref{contractDesign},  the feasible contract is designed with the optimal UAV being employed to offload the cellular traffic.   
Simulation results are presented in Section \ref{sec:simulation}. Finally, conclusions are drawn in Section \ref{conclusion}.

\section{System Model}\label{sec:sm}

Consider a set $\mathcal{I}$ of $I$ cellular BSs providing downlink wireless service to a group of user equipments (UEs) in a geographical area $\mathcal{A}$. 
Each BS $i \in \mathcal{I}$ serves an area  $\mathcal{A}_i$, such that $\cup_{ \forall i \in \mathcal{I}}   \mathcal{A}_i = \mathcal{A}$, and $\mathcal{A}_i \cap \mathcal{A}_k = \emptyset$ for any  $i\ne k \in \mathcal{I}$. 
The spatial distribution of the served UEs for each BS $i$ is denoted by $f_i(\boldsymbol{y})$, where $\int_{\boldsymbol{y} \in \mathcal{A}_i} f_i(\boldsymbol{y}) \dif \boldsymbol{y} =1$.   
A set $\mathcal{J}$ of $J$ flying UAVs can provide additional cellular service,  if the hotspot events happen in the ground cellular network.  
We assume that the group BSs and UAVs belong to different network operators, and different frequency bands  are used for the ground and aerial downlink transmissions, separately. 
A single antenna is equipped at each UE that can receive signals from both the ground BS and the UAV.     
Initially, a UE will connect to one of the ground BSs.  
However, as shown in Fig. \ref{systemmodel},  if a ground BS $i \in \mathcal{I}$ is overloaded in the downlink, BS $i$ can request the assistant of a UAV to offload the service of some UEs.  
We assume that  a UAV only serves the UEs of a single BS at each time, while each BS can employ multiple UAVs, based on the cellular traffic demand.  
In this regard, if the downlink traffic demand at the level of a given BS is excessive, such that no single UAV is capable to alleviate traffic congestion, then the BS will divide the offloaded UEs into multiple spatially-disjoint sets, and request an individual UAV for each  UE set, independently. 
Meanwhile, each UAV is equipped with a directional antenna array that enables beamforming transmissions \cite{zhu20193d}.  
As a result, interference between different UAV networks is negligible.     

\begin{figure}[!t]
	\begin{center}
		\vspace{-3 cm}
		\includegraphics[width=12cm]{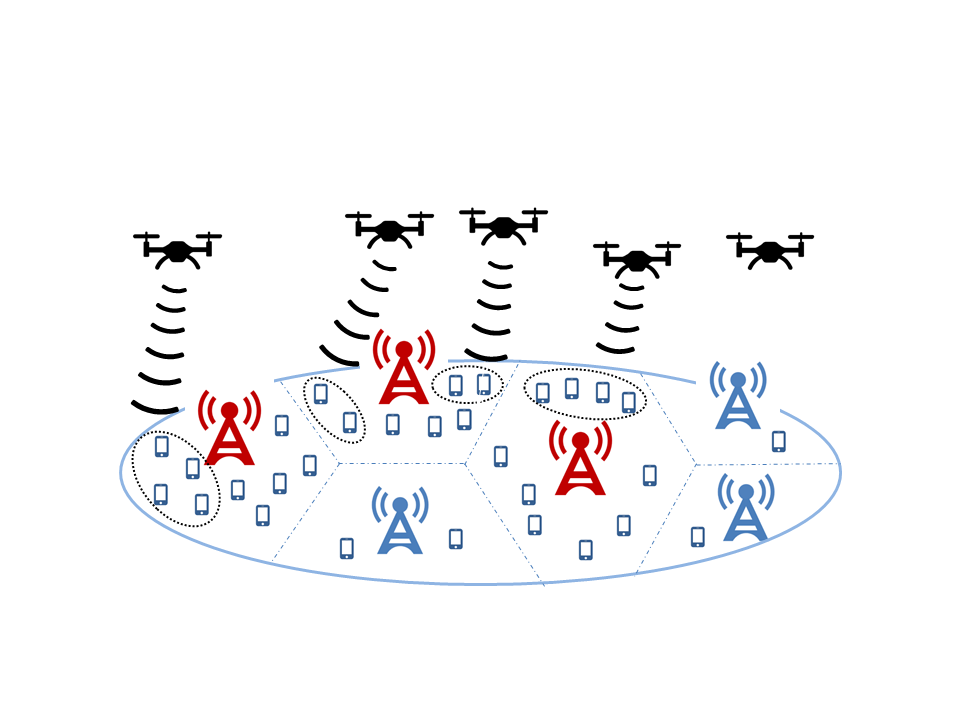}
		\vspace{-1.5cm}
		\caption{\label{systemmodel}\small The red BSs are having excessive traffic load in the downlink, thus each red BS requests a UAV to offload a part of UEs to the aerial cellular system. }
	\end{center}\vspace{-1.1cm}
\end{figure}

\subsection{Air-to-ground downlink communications}
The path loss of the air-to-ground communication link from a typical UAV located at $\boldsymbol{x} \in \mathbb{R}^3$ to a typical ground UE that is located at $\boldsymbol{y} \in \mathbb{R}^3 $ can be given by \cite{al2014modeling}:
\begin{align}\label{channelGain}
h[dB](\boldsymbol{x},\boldsymbol{y}) =  20 \log \left( \frac{4 \pi f_{c} \norm{\boldsymbol{x}-\boldsymbol{y}}}{c} \right) + \xi(\boldsymbol{x},\boldsymbol{y}), 
\end{align}
where  $f_{c}$ is the carrier frequency of UAV downlink communications, $ \norm{\boldsymbol{x}-\boldsymbol{y}}$ is the UAV-UE distance, $c$ is the speed of light, 
and $\xi(\boldsymbol{x},\boldsymbol{y})$ is the additional path loss  of the air-to-ground channel, compared with the free space propagation.  
The value of $\xi(\boldsymbol{x},\boldsymbol{y})$  can be modeled as a Gaussian distribution with different parameters $(\mu_{\text{LOS}},\sigma^2_{\text{LOS}})$ and $(\mu_{\text{NLOS}},\sigma^2_{\text{NLOS}})$  for the LOS and non-line-of-sight (NLOS) links,  respectively. 
Then, the achievable data rate from a UAV $j \in \mathcal{J}$ located at $\boldsymbol{x}_j$ to a UE located at $\boldsymbol{y} \in \mathcal{A}_i$ is 
\begin{align}\label{dataRate}\vspace{-0.2cm}
	r_{ij}(\boldsymbol{x}_j,\boldsymbol{y},p_j) =  w  \log_2 \left( 1+  \frac{ g(\boldsymbol{x}_j,\boldsymbol{y})  p_{j} }{h(\boldsymbol{x}_j,\boldsymbol{y}) w n_0}\right),  
\end{align}
where $w$ is the downlink bandwidth of each UAV,  $g(\boldsymbol{x}_j,\boldsymbol{y}) $ is the antenna gain of  UAV $j$ towards the UE located at $\boldsymbol{y}$,  $p_{j}$ is the transmit power of UAV $j$, $h(\boldsymbol{x}_j,\boldsymbol{y}) $ is the path loss in linear scale, 
and $n_0$ is the average noise power spectrum density at the UE.  
The probability of having a LOS link between UAV $j$ located at $\boldsymbol{x}_j$ and the UE located at $\boldsymbol{y}$ is given by \cite{al2014optimal}: 
\begin{equation}
P_{\text{LOS}}(\boldsymbol{x}_j,\boldsymbol{y}) = \frac{1}{1 + a \exp(- b[ \frac{180}{\pi}\varphi(\boldsymbol{x}_j,\boldsymbol{y}) -a] ) },
\end{equation} 
where $a$ and $b$ are constant values that depend on the communication environment, 
$\varphi(\boldsymbol{x}_j,\boldsymbol{y}) = \sin^{-1}(\frac{H_{j}}{\|\boldsymbol{x}_j- \boldsymbol{y} \|})$ is the elevation angle, and $H_{j}$ is the altitude of UAV $j$.  
Consequently, the average downlink rate between a UAV $j$ and a UE at $\boldsymbol{y} \in \mathcal{A}_i$ will be: 
\begin{equation}
	\bar{r}_{ij}(\boldsymbol{x}_j,\boldsymbol{y},p_j) = P_{\text{LOS}}(\boldsymbol{x}_j,\boldsymbol{y}) \cdot r^{\text{LOS}}_{ij}(\boldsymbol{x}_j,\boldsymbol{y},p_j) + (1-P_{\text{LOS}}(\boldsymbol{x}_j,\boldsymbol{y})) \cdot r^{\text{NLOS}}_{ij}(\boldsymbol{x}_j,\boldsymbol{y},p_j).  
\end{equation}
 
In order to serve multiple downlink UEs,  each UAV applies a time-division-multiple-access (TDMA) technique\footnote{The focus of this work is on the deployment stage and, hence, we do not optimize the multiple access scheme type or operation. Optimizing multiple access can be done post-deployment and will be subject to future work.} 
that divides the time resource evenly among all served UEs, and all bandwidth will be allocated to one single UE during each time slot \cite{lyu2016cyclical}.  
By using suitable uplink control signals, the UAV-UE channel can be accurately measured, and thus, the beamforming of UAV's antennas can be properly optimized towards the served UE.  
Consequently, the average rate that UAV $j$ can provide to the hotspot UEs from BS $i$ will be
\begin{equation} 
\begin{aligned}\label{networkCapacity}
C_{ij}(\boldsymbol{x}_j,p_j) & =  \int_{\mathcal{A}^{c}_i}  \bar{r}_{ij}(\boldsymbol{x}_j,\boldsymbol{y},p_j) f^c_i(\boldsymbol{y}) \dif \boldsymbol{y},  
\end{aligned}
\end{equation}
where  $\mathcal{A}^{c}_i \subset  \mathcal{A}_i$ is the hotspot area,  
$f^c_i(\boldsymbol{y})$ is the normalized spatial distribution of UEs within  $\mathcal{A}_i^c$, and $\int_{\mathcal{A}^{c}_i} f^c_i(\boldsymbol{y}) \dif \boldsymbol{y}=1$.  
When downlink congestion occurs, BS $i$ detects the congested area $\mathcal{A}^{c}_i$ and offloads the UEs within $\mathcal{A}^{c}_i$ to the target UAV.

\begin{table}[t!]  \vspace{-0.8cm}
	\centering
	\caption{Summary of our notations}
	\label{table1} \scriptsize
	\begin{tabular}{ |p{0.9 cm}|p{6.6cm}||p{0.9cm}|p{6.6cm}|  } 
		\hline
		 Notation  &    Description  &  Notation   &   Description  \\
		\hline 
		$I$, $J$  & Number of BSs and number of UAVs  &  $t_{ij}$  & Movement time of UAV $j$ to the service location of BS $i$     \\ 
		\hline
		$T$ & Interval of the UAV's offloading service  & $\bar{r}_{ij}$ &  Average rate of UAV $j$ to each hotspot user of BS $i$     \\
		\hline
		$\boldsymbol{y}$    & Location of a ground user  &   $C_{ij}$  & Average rate  of UAV $j$ to all hotspot users of BS $i$     \\
		\hline
		$\boldsymbol{x}_j$, $\boldsymbol{x}_{ij}^{*}$  & Current location  of UAV $j$, and service location of UAV $j$ associated with BS $i$  &  $B_{ij}$& Amount of data that UAV $j$ provides to all hotspot users of BS $i$ within one $T$  \\
		\hline
		$f_i,S_i$ & User distribution and data demand distribution of BS $i$ &   $\rho_i$, $\rho_i^c$  &  Average rate demand  per user/ hotspot user of BS $i$   \\ 
		\hline
		$\mathcal{A}_i$, $\mathcal{A}_i^c$  & Service area and hotspot area of BS $i$    &  $U_{ij}$ & Utility of BS $i$ by employing UAV $j$ \\
		\hline
		$Q_i$, $Q_i^c$ & Number of all users and number of hotspot users of BS $i$   & $R_{ij}$  & Utility of UAV $j$ by  providing offloading service to BS $i$  \\
		\hline
		$d_i$  & Data demand of hotspot users within one $T$ of BS $i$ & $\theta_{ij}$ & Type of UAV $j$  with respect to BS $i$ \\
		\hline
		$p_{j}$  &Transmit power of UAV $j$   &  $\boldsymbol\omega$,  $\boldsymbol{\pi}$    & Weight vectors in the user and demand distribution models  \\
		\hline
		$u_i$    & Unit payment of BS $i$ & $\boldsymbol{\mu}$, $\boldsymbol{\Sigma}$&      Mean  and covariance  of Gaussian distribution  \\  
		\hline 
	\end{tabular}  
\end{table}

\subsection{UAV deployment process}

\begin{figure}[!t]
	\begin{center}
		\vspace{-0.2cm}
		\includegraphics[width=12cm]{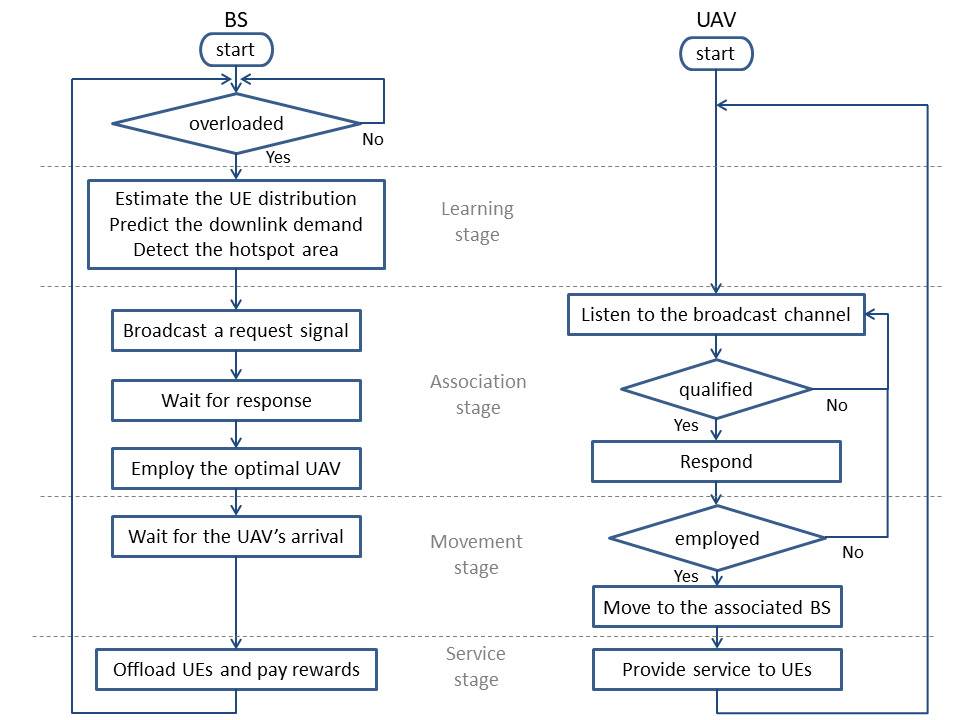}
		\caption{\label{deploymentprocess}\small  Flowchart of the proposed UAV predictive deployment process for each BS (left) and each UAV (right).  }  
	\end{center}\vspace{-1cm}
\end{figure}

Given the average downlink rate of each UAV in (\ref{networkCapacity}),  the next step is to deploy suitable UAVs to offload the traffic  and alleviate the downlink congestion in the ground cellular network. 
To facilitate the analysis, we assume that the  service interval of each UAV a constant $T$.  
As shown in Fig. \ref{deploymentprocess}, the deployment process has four sequential stages:  
learning stage, association stage, movement stage, and service stage. 
The details of each stage are given as next: 

\subsubsection{Learning stage}

For each BS $i \in \mathcal{I}$, once the downlink traffic exceeds its network capacity, a learning stage with a fixed duration $\tau$ starts.  
During $\tau$, BS $i$ collects the  transmission record  $\mathcal{S}_i=\{ (s, \boldsymbol{y},t) | \boldsymbol{y} \in \mathcal{A}_i, t \in [\Delta t,2\Delta t,\cdots,\tau] \}$,    
where $s$ is the  data rate that BS $i$ provides to the UE located at $\boldsymbol{y}$ at time $t$, and $\Delta t$ is the time slot during which the downlink rate can be considered to be constant. 
Given that the hotspot area $\mathcal{A}_i^c$ and the UE distribution $f_i(\boldsymbol{y})$ is unknown, 
a learning stage is necessary for BS $i$ to estimate the spatial distribution of UEs and the traffic demand of the on-going hotspot event. 
Considering common events, such as sport games and outdoor concerts, where mobile users are often confined to seat or geographically constrained spaces, the mobility of hotspot UEs is scarce. 
Thus, we assume that the UE distribution $f_i(\boldsymbol{y})$ during one $T$ is time-invariant.  
Furthermore, to estimate the traffic demand within the congested area, a spatial density function ${{S}_i}(\boldsymbol{y})$ is proposed to evaluate the average data rate per UE at each location $\boldsymbol{y} \in \mathcal{A}_i$.  
The proposed approach for estimating the UE distribution and traffic demand will be discussed in Section \ref{sec:ml}.  
Consequently, the total data demand $d_i$ from a hotspot area $\mathcal{A}^{c}_i$ during a time interval $T$ will be given by: 
\begin{equation}\label{dataDemand}\vspace{-0.2cm}
d_i =  \int_t^{t+T}  \int_{\boldsymbol{y} \in \mathcal{A}^{c}_i} {S}_i(\boldsymbol{y}) \dif \boldsymbol{y} \dif t =  T  \int_{\boldsymbol{y} \in \mathcal{A}^{c}_i} {S}_i(\boldsymbol{y}) \dif \boldsymbol{y}.
\end{equation}

Next, the  BS will estimate the necessary number of UAVs to alleviate downlink congestion and calculate the optimal service location of each target UAV.   
Following from \cite[equation~(42)]{mozaffari2017mobile} and \cite[equations~(10)~and~(11)]{mozaffari2016optimal}, given the UE distribution $f^c_i(\boldsymbol{y})$ and the hotspot area  $\mathcal{A}^{c}_i$, the optimal  location $\boldsymbol{x}^*_{ij}$ of a target UAV $j$ in serving BS $i$ can be derived in a way to minimize the transmit power $p_{ij}(\boldsymbol{x}_{ij}^{*},\rho^c_i)$, while satisfying the average rate requirement $\rho^c_i$ per UE. 
The average rate per UE is defined  by the ratio of the sum data rate within the hotspot area $\mathcal{A}_i^c$ over the total number of hotspot UEs $Q^c_i$, where $\rho^c_i= \frac{d_i}{T Q_i^c} $. 
Thus, the optimal service location of the target UAV can be calculated by BS $i$, prior to the UAV's deployment. 
We define $p_{\text{max}}$ to be the maximum transmit power of each UAV, which is limited by the antennas' hardware, and  $\eta \in (0,1)$ to be the ratio of efficient transmission time to the service time $T$, due to the signal overhead and the channel measurement process. 
If $d_i > \eta T C_{ij}(\boldsymbol{x}_{ij}^{*}, p_\text{max})$, then even though a UAV is located at the optimal service point $\boldsymbol{x}_{ij}^{*}$ and it applies the maximum  transmit power $p_\text{max}$, the downlink demand $d_i$ cannot be satisfied. 
In this case, using a single UAV $j \in \mathcal{J}$ is no longer sufficient to offload the hotspot traffic. 
Therefore,  BS $i$ will evenly divide the hotspot area $\mathcal{A}_i^c$, based on the downlink data demand, into $N$ disjoint areas $\{ \mathcal{A}_i^c(n)\}_{n = 1,\cdots,N}$,  where  $\int_{\boldsymbol{y} \in \mathcal{A}^c_i(n) } S_i(\boldsymbol{y}) \dif \boldsymbol{y} = \frac{d_i}{N}$,  
and $N$ is the smallest integer needed to guarantee that, for each subset $n= 1,\cdots,N$,  the following requirement holds: 
\begin{equation}\label{multiUAV} \vspace{-0.2cm}
	d_i(n) = \frac{d_i}{N}< \eta T  C_{ij}(\boldsymbol{x}_{ij}^{*}(n),p_{\text{max}}). 
\end{equation}    
For each $n = 1,\cdots, N$, BS $i$ will deploy a UAV onto the service point $\boldsymbol{x}_{ij}^{*}(n)$ to offload the downlink traffic with the subarea $\mathcal{A}_i(n)$.  
The requests of multiple UAVs to different subareas are sequential and independent at each round $n = 1, \cdots, N$.

\subsubsection{Association stage  } 
In the association stage,  each overloaded BS $i$ requests the assistance of a UAV, by broadcasting a signal with the downlink demand $d_i(n)$ and the service location $\boldsymbol{x}_{ij}^{*}(n)$ for each subset $n$.  
A first-call-first-serve scheme is applied, and each BS $i \in \mathcal{I}$ will listen to the broadcast channel before sending the signal.  
If the channel is occupied by another BS, then BS $i$ will wait until the on-going association  is completed. 
For each BS $i$, the  goal is to request a UAV that has enough on-board power to meet the downlink demand $d_i $ of UEs within $\mathcal{A}_i^c$. 
The optimal UAV association to each overloaded BS will be studied in Section \ref{contractDesign}.  

\subsubsection{Movement stage}
After the association stage, the selected UAV $j$ starts to move from its current location $\boldsymbol{x}_j$ to the service point $\boldsymbol{x}_{ij}^{*}$ of its target BS $i$. The duration $t_{ij}$ of the movement stage depends on the distance $\|\boldsymbol{x}_j - \boldsymbol{x}_{ij}^{*} \|$ and the average speed $v_j$ of UAV $j$. 

\subsubsection{Service stage}
Once it reaches the service point,  UAV $j$ will provide downlink communications to its group of associated UEs for a  time period $T - t_{ij}$.  
Note that, during the movement and service stages, the employed UAV is fully dedicated to its associated BS. Thus, the UAV cannot be requested by any other BSs until the end of its current service.  
Furthermore, to guarantee a sufficient service time, the maximum travel time of UAV $j$ is limited by $t_{ij}\le \kappa_iT$, where $\kappa_i \in (0,1)$. If the travel time exceeds $\kappa_iT$, UAV $j$ is not a potential choice for BS $i$.

After the service stage ends, the BS-UAV association will end. Then, UAV $j$ will listen to the broadcast channel, if its remaining on-board energy $E_j$ can support another service period  $T$; otherwise, the UAV will move to a nearby recharging station. 
We assume that a number of recharging stations are deployed, such that a UAV can access a recharging station within a short flight time from any location in $\mathcal{A}$. Thus, the movement energy to a recharging station is negligible to effect the BS-UAV association results. 
In order to optimally associate UAVs to each overloaded BS, we first define a utility function that each BS aims to maximize when selecting a UAV to offload cellular traffic in Section \ref{subsec:UBS}. 
Next, the UAV's utility function  is given in Section \ref{subsec:UUAV} that defines its economic payoff  from serving a ground BS. 

\subsection{Utility function of a ground BS} \label{subsec:UBS} 
In  TDMA downlink transmissions, the employed UAV $j$ evenly divides the service time $T-t_{ij}$ to each hotspot  UE. 
Therefore,  based on the average downlink rate in (\ref{networkCapacity}),
the achievable data amount that UAV $j$ can provide to the UEs of BS $i$ is 
\begin{equation}
\begin{aligned}\label{aerialCapacity}
B_{ij}(p_{j}) = \eta (T-t_{ij}) C_{ij}(p_{j}). 
\end{aligned}
\end{equation} 
Note that, the movement duration $t_{ij}$ and the transmit power $p_j$ are private information for UAV $j$, and, thus, BS $i$ cannot know their values during the service request process.   
Then, the utility of BS $i$, by employing UAV $j$ to offload the excess cellular traffic, will be:   
\begin{equation}
	\begin{aligned}\label{U_BS}
	U_{ij}(u_i,p_j,d_i ) &= \beta  B_{ij}(p_j) - u_i d_i, 
	\end{aligned}
\end{equation} 
where $\beta $ is the payment from UEs to BS $i$ (per bit of downlink data), and  $u_i$ is the unit payment that BS $i$ gives to UAV $j$ (per bit of aerial data service). 
Thus, the first term in (\ref{U_BS}) represents the reward that BS $i$ gets from its UEs by employing UAV $j$ to provide aerial cellular service, and
the second term is the total payment that BS $i$ gives to UAV $j$. 

\subsection{Energy model and utility function of a UAV}\label{subsec:UUAV} 

In the considered problem, the power consumption of  each UAV consists of three main components: the transmit power $p_j$, the propulsion power $m$, and the hovering power $p_h$.   
For tractability and as done in \cite{zeng2019energy}, we ignore the acceleration and deceleration stages during the UAV's movement, and the propulsion power $m$ is considered as a constant for a fixed flying speed.  
Then, the travel time $t_{ij}$ can be uniquely determined based on the moving distance $\| \boldsymbol{x}_j - \boldsymbol{x}_{ij}^* \|$. 
During the service stage, the maximum available  power that UAV $j$ can use for downlink transmissions  will be  $p_{ij}^{\text{max}} = \frac{E_j-  m t_{ij} - p_h (T-t_{ij})}{T-t_{ij}}$,  where $m t_{ij}$ is the energy consumed during the UAV's movement, and $p_h (T-t_{ij}) $ is the hovering energy during the service stage.  
Therefore, we have the transmit power $p_{j} \in [p_{ij}(\boldsymbol{x}_{ij}^{*},\rho^c_i),  \min \{p_{ij}^{\text{max}},p_{\text{max}}  \} ]  $, where $p_{ij}(\boldsymbol{x}_{ij}^{*},\rho^c_i)$ is the  minimum required power to satisfy the downlink data demand, and  $p_{\text{max}}$ is the maximum transmit power.  
Without loss of generality, we assume that $p_{ij}(\boldsymbol{x}_{ij}^{*},\rho^c_i) \le \min \{p_{ij}^{\text{max}},p_{\text{max}}  \}$ holds. Otherwise, UAV $j$ is not a potential option for BS $i$.  
Consequently, the utility that a UAV $j \in \mathcal{J}$ can achieve from providing the aerial cellular service to the UEs of BS $i$ will be: 
\begin{align}\label{U_UAV}
R_{ij}(u_i,p_{j},{d}_i) = u_i {d}_i - \alpha [ p_{j}(T-t_{ij})+ p_h(T-t_{ij}) + m t_{ij}],
\end{align}
where $\alpha$ is a unit cost per Joule of UAV's on-board energy.  
The first term in (\ref{U_UAV}) is the reward that UAV $j$ obtains from  BS $i$, and the second term  is the energy cost. 

\section{Problem formulation}\label{sec:pf}

The objective of an overloaded BS is to employ a suitable UAV with sufficient on-board power to offload excessive cellular traffic, while maximizing the utility function in (\ref{U_BS}). 
Meanwhile, the goal of each UAV is to optimize its utility in (\ref{U_UAV}). 
However, by comparing (\ref{U_BS}) and  (\ref{U_UAV}), we realize that  $\arg \max_{u_i,p_j} U_{ij} =\arg \min_{u_i,p_j} R_{ij}$ and $\arg \max_{u_i,p_j} R_{ij} =\arg \min_{u_i,p_j} U_{ij}$. Therefore, each BS-UAV pair has conflicting interests. 
Given that the BSs and UAVs belong to different operators, each will maximize its own utility. The conflict between each BS and each UAV is irreconcilable. 

Meanwhile, since the values of the unit payment $u_i$ and the data demand $d_i$ will be broadcast by BS $i$ during the association stage, each UAV $j$ has all necessary information to determine its utility. 
However, BS $i$ cannot easily acquire some private information of each UAV, such as its current location and onboard energy, which causes the asymmetric information. 
Since private information of each UAV determines its travel time to a BS and the downlink communication capacity, it is essential for the BS to have accurate information to evaluate the service performance of each UAV.   
In order to guarantee a truthful information exchange, 
each BS $i$ can jointly design $(u_i,p_j)$ to ensure  mutual benefit for both the BS and UAV operators, so that the conflict of interest can be properly resolved. 
Therefore, we let $\phi_{ij} = (u_i,p_{j})$ be a \emph{traffic offload contract}, which captures the values of $p_j$ and $u_i$ if BS $i$ employs UAV $j$ to offload its hotspot UEs.  
In order to understand the relationship between the unit payment $u_i$  and the transmit power $p_j$, we divide both sides of (\ref{U_UAV}) by $\alpha(T-t_{ij})$ and rewrite the utility of UAV $j$ as follows:  
\begin{equation}
	\begin{aligned}\label{newU_UAV}
	\tilde{R}_{ij}(u_i,p_{j},d_i) &= \frac{ {d}_i }{\alpha (T-t_{ij})} u_i - p_j - \frac{ mt_{ij} }{T-t_{ij}} - p_h,\\ 
	&= \theta_{ij} u_i -  p_{j} -M_{ij}, 
	\end{aligned}
\end{equation}
where the values of $\theta_{ij} = \frac{ {d}_i }{\alpha (T-t_{ij})}$ and $M_{ij} = \frac{ mt_{ij} }{ T-t_{ij}} + p_h$ are determined for each BS-UAV pair.

Since $\theta_{ij}$ determines the sensitivity of $\tilde{R}_{ij}$ to the increase of  $u_i$ and  $p_{j}$ in (\ref{newU_UAV}), its value is essential for the joint design of $(u_i,p_j)$.  
Therefore, we define $\theta_{ij}$   as the \emph{type} of UAV $j$ with respect to BS $i$, where    $\theta_{ij} \in \Theta_i = [\frac{d_i}{\alpha T}    ,   \frac{d_i}{\alpha (1-\kappa_i)T}   ]$.  
Note that,  due to the privacy of  $t_{ij}$,  the type $\theta_{ij}$ of each UAV $j \in \mathcal{J}$ is unknown for BS $i$.  
In order to design the contract without knowing each UAV's type, before broadcasting the request signal,  BS $i$ will design a set of contracts 
$\Phi_i(\Theta_i) = \{\phi_{ij}(\theta_{ij})  | \forall \theta_{ij} \} =  \{(u_i(\theta_{ij}),p_{j}(\theta_{ij}) )| \forall \theta_{ij} \}$  for all UAV types $\theta_{ij} \in \Theta_i$,      
where $u_i(\theta_{ij})$ represents the payment that BS $i$ pays to UAV $j$ per bit of data, given that UAV $j$ is of type $\theta_{ij}$, and $p_{j}(\theta_{ij})$ is the transmit power that UAV $j$ of type  $\theta_{ij}$ provides  to serve  BS $i$.  
Then, (\ref{newU_UAV}) becomes $\tilde{R}(\theta_{ij}) = \theta_{ij} u_i(\theta_{ij}) - p_j(\theta_{ij}) - M_{ij}$.  
Meanwhile, to ensure that a UAV will accept the contract of its own type, two constraints,   based on contract theory \cite{bolton2005contract}, must be considered, which are individual rationality (IR) condition and incentive compatibility (IC) condition. 
 
\begin{definition}
	[Individual Rationality] A contract designed by BS $i$ satisfies the IR constraint, if a UAV of any type $\theta_{ij} \in \Theta_i$ will receive a non-negative payoff from BS $i$ by accepting the contract item for type $\theta_{ij}$, i.e.  
	$\theta_{ij}  u_i(	\theta_{ij} ) -  p_{j}(	\theta_{ij} ) - M_{ij} \ge 0$, $\forall \theta_{ij} \in \Theta_i$.  
\end{definition}  
A contract satisfying the IR condition guarantees that the reward that each UAV $j  \in \mathcal{J}$ can obtain from serving BS $i$ is great than or equal to  zero. 
Compared with the non-employed state in which the payoff is always zero, each UAV is willing to accept the contract from the requesting BS, as long as its contract satisfies the IR condition.      
\begin{definition}
	[Incentive Compatibility] A contract designed by BS $i$ satisfies the IC constraint, if a UAV of type $\theta_{ij}$ will get the highest utility from BS $i$ by accepting the contract designed for its own type $\theta_{ij}$, compared with all the other types $\theta$ in $\Theta_i$, i.e. 
	$\theta_{ij} u_i(\theta_{ij}) -  p_{j}(\theta_{ij}) - M_{ij} \ge \theta_{ij}  u_i(\theta ) -  p_{j}(\theta ) - M_{ij}$, $ \forall   \theta  \in \Theta_i$. 
\end{definition} 
A contract satisfying IC condition guarantees that each UAV $j$ will only accept the contract designed for its own type $\theta_{ij}$, since accepting the contract of any other type $\theta \in \Theta_i$ will  result in a lower or the same reward. 
A contract satisfying both IR and IC conditions is called a \emph{feasible} contract, which ensures the UAV will accept and only accept the contract designed for its  type. 


Consequently, for each overloaded BS $i \in \mathcal{I}$, the objective is to maximize its utility in (\ref{U_BS}), by estimating the downlink data demand ${d}_i$ within the hotspot area $\mathcal{A}^{c}_i$, designing the contract set $\Phi_i$ for each UAV of any type in $\Theta_i$, and determining an optimal UAV $j \in \mathcal{J}$ to offload the excessive cellular service.  
We formulate this predictive UAV deployment problem as follows,  
\begin{subequations}\label{problem_BS}   
	\begin{align}  
	\max_{\substack{    \{(u_i(\theta_{ij}), p_{j}(\theta_{ij}) )| \forall \theta_{ij} \} , j \in \mathcal{J}}} \quad &  U_{ij} (u_i(\theta_{ij}),p_{j}(\theta_{ij}), {d}_i ), \label{obj1} \\
	\textrm{s. t.} \quad  
	& R_{ij}(\theta_{ij}) \ge 0,    \label{con_IR}\\  
	& R_{ij}(\theta_{ij}) \ge R_{ij}(\theta), \forall \theta  \in \Theta_i  \label{con_IC}, \\ 
	& p_{ij}(\boldsymbol{x}_{ij}^{*},\rho^c_i) \le p_j(\theta_{ij}) \le \min \{p_{ij}^{\text{max}},p_{\text{max}}  \}, \label{con_power}\\  
	& t_{ij}  \le \kappa_i T, \label{con_time}\\  
	& {d}_i > 0,  u_i(\theta_{ij}) > 0. \label{con_other}  
 	\end{align}
\end{subequations} 
The objective function (\ref{obj1}) is the utility that BS $i$ obtains from employing UAV $j$ of type $\theta_{ij}$.   
(\ref{con_IR}) and (\ref{con_IC}) are the IR and IC constraints, respectively. 
(\ref{con_power}) is the constraint on the transmit power, and (\ref{con_time}) limits the maximum travel time.  
(\ref{con_other}) imposes a positive downlink demand within $\mathcal{A}_i^c$, and a positive unit payment. 
Here, (\ref{con_IC}) itself is an optimization problem, which must be first addressed to satisfy the IC condition. 
Since the selection of $\theta_{ij}$ will jointly determine the values of the objective function and all constraints in (\ref{problem_BS}),  $\theta_{ij}$ becomes the key variable to find the optimal association result.  
To simplify the optimization problem (\ref{problem_BS}), we first derive the necessary and sufficient conditions for IC and IR constraints, based on the UAV type $\theta_{ij}$, which essentially reduces to the problem of designing a feasible contract. 
Consequently,  to solve the predictive UAV deployment problem in (\ref{problem_BS}), 
first, a learning-based approach  is proposed to predict the downlink demand ${d}_i$  in Section \ref{sec:ml}. Next, the  traffic offload contract  $\Phi_{i}$ is developed in Section \ref{contractDesign}, with the optimal UAV being selected to maximize the utility of BS $i$.   

\section{Learning Stage: Estimation of Cellular Traffic Demand }\label{sec:ml}
 
In this section, our goal is to estimate the UE distribution and the downlink data demand  during a hotspot event. 
This estimation is necessary to solve (\ref{problem_BS}) because the data demand $d_i$ is needed  to determine the type $\theta_{ij}$ of each UAV $j$ with respect to  BS $i$.   
To enable an accurate modeling,  BS $i$ collects the downlink transmission records during the learning stage.   
For notation simplicity, let $N$ be the total number of records, 
and $\mathcal{S}_i$ can be rewritten as $ \{(s_n, \boldsymbol{y}_n, t_n) | n = 1, \cdots, N \}$.  
In Section \ref{UEdistribution}, we extract the spatial distribution $f_i(\boldsymbol{y})$ of the downlink UEs, and then, in Section \ref{demandfunction} the downlink data rate $S_i(\boldsymbol{y})$ is modeled and the hotspot area $\mathcal{A}_i^c$ is determined. 
Consequently, the downlink data demand $d_i$ can be given by   (\ref{dataDemand}). 

\subsection{Estimation of the UE distribution} \label{UEdistribution} 
Given $\mathcal{S}_i$,  BS $i$ can model the UE distribution, using the location information  $\mathcal{Y}$ = $\{ \boldsymbol{y}_1,  \cdots,\boldsymbol{y}_N\}$. 
We assume that each UE's location follows a latent distribution $f_i(\boldsymbol{y})$, and each  $ \boldsymbol{y}_n$ is an independent sample from this distribution. 
A Gaussian mixture model (GMM), which is the weighted sum of multiple Gaussian distributions, can model the UE's distribution, as follows: 
\begin{equation}\label{GMM}
	f_i(\boldsymbol{y}) = \sum_{l=1}^{L} \omega_l \mathcal{N}(\boldsymbol{y}| \boldsymbol{\mu}_l, \boldsymbol{\Sigma}_l),  
\end{equation} 
where $L$ is the number of Gaussian distributions, and
$\omega_l \in (0,1)$, $\boldsymbol{\mu}_l$, and $\boldsymbol{\Sigma}_l$ are the weight, mean and variance of the $l$-th Gaussian, respectively, with $\sum_{l} \omega_l = 1$. 
The value of $\omega_l$ represents the probability that the data point $\boldsymbol{y}$ is generated by the $l$-th distribution. 
GMM has been widely applied in  \cite{kasgari2018human,selim2015modeling,christopher2016pattern} to model the distribution of a latent variable based the sampled data.  
Due to its special feature of multiple clusters, GMM is particularly appropriate to model the UE distribution in the congested  area,  where each hotspot area  corresponds to a Gaussian center. 

Given the  location record $\mathcal{Y}$,  the expectation-maximization (EM) algorithm \cite{christopher2016pattern} is applied to optimize the parameters $\{\omega_l, \boldsymbol{\mu}_l, \boldsymbol{\Sigma}_l \}_{ l=1,\cdots,L}$   in (\ref{GMM}) via an iterative approach, which maximizes a log-likelihood function 
$	\ln p (\mathcal{Y}| \boldsymbol{\omega},\boldsymbol{\mu},\boldsymbol{\Sigma}) = \ln \Pi_{n=1}^N \left(  \sum_{l=1}^{L} \omega_l \mathcal{N} (\boldsymbol{y}_n|\boldsymbol{\mu}_l,\boldsymbol{\Sigma}_l) \right)$.   
After initialization, the EM algorithm alternates between the E and M steps. 
First, in the E step, the posterior probability that  $\boldsymbol{y}_n$ is generated by the $l$-th Gaussian is calculated by 
\begin{equation}\label{responsibility}
	v_{nl} = \frac{\omega_l \mathcal{N} (\boldsymbol{y}_n|\boldsymbol{\mu}_l,\boldsymbol{\Sigma}_l)}{\sum_{z=1}^{L} \omega_z \mathcal{N}(\boldsymbol{y}_n|\boldsymbol{\mu}_z,\boldsymbol{\Sigma}_z)}. 
\end{equation} 
Then, in the M step, the parameters are updated using the posterior probability (\ref{responsibility})  by 
\begin{equation}
	\boldsymbol{\mu}_l = \frac{\sum_n v_{nl}\boldsymbol{y}_n }{ \sum_n v_{nl}}, ~~~ \boldsymbol{\Sigma_l} =  \frac{\sum_n v_{nl}(\boldsymbol{y}_n - \boldsymbol{\mu}_l) (\boldsymbol{y}_n - \boldsymbol{\mu}_l)^T}{ \sum_n v_{nl}}, ~~~ \omega_l = \frac{\sum_n v_{nl}}{ N}. 
\end{equation}
After each EM iteration, the updated parameters will result in an increase of the log-likelihood function, and the algorithm is guaranteed to converge to a local optimum  \cite{christopher2016pattern}.  

\subsection{Estimation of the downlink data rate} \label{demandfunction}

In order to predict the downlink demand $d_i$, each BS $i$ needs to capture the spatial feature of the cellular traffic. 
Based on  the assumption of the time-invariant data demand, 
we define the traffic \textit{density} $\bar{{S}_i}(\boldsymbol{y})$ at each location $\boldsymbol{y} \in \mathcal{A}_i$  as  the time-average downlink rate at $\boldsymbol{y}$ during the learning stage, where 
$	\bar{{S}_i}(\boldsymbol{y})  =  \frac{1 }{\tau}\sum_{(s_n,\boldsymbol{y},t_n) \in \mathcal{S}_i} s_n \Delta t$.  
In order to generate  a continuous model $\bar{S}_i(\boldsymbol{y})$  that captures the spatial features of the downlink traffic density, 
a Gaussian mixture function (GMF) is proposed  as follow,  
\begin{equation}\label{GMF}
	{S}_i(\boldsymbol{y})  = \sum_{k=1}^{K} \pi_k \exp \left(  \frac{-(\boldsymbol{y}-\boldsymbol{\mu}_k)^T \boldsymbol{\Sigma}_k^{-1} (\boldsymbol{y}-\boldsymbol{\mu}_k)}{2} \right) ,
\end{equation}    
where $K$ is the number of basis functions, and $\pi_k$, $\boldsymbol{\mu}_k$, and $\boldsymbol{\Sigma}_k$  are the coefficient,  mean and variance of the $k$-th Gaussian function. 
Thus, the traffic density at location $\boldsymbol{y}$ is modeled by the sum of $K$ Gaussian functions with coefficient $\{ \pi_k \}_{k=1,\cdots,K}$. 

Note that, the GMF in (\ref{GMF}) is different from the GMM in (\ref{GMM}).   
First,  a GMM has a probabilistic interpretation, while a GMF is a deterministic function that calculates the traffic density at each location $\boldsymbol{y}$ by adding the values of $K$ Gaussian functions with different coefficients.  
Second,  the sum of each coefficient $\pi_k$ in GMF represents the total volume of downlink traffic demand. Thus, it is always greater than one, which make a difference from the unit weight-sum in GMM.  

To properly model the downlink traffic density  $\bar{S}_i$,  the parameters $\{\pi_k,\boldsymbol{\mu}_k,\boldsymbol{\Sigma}_k \}_{k=1,\cdots,K}$ in (\ref{GMF}) need to be optimized. 
Since the EM method associates the same weight to all data points, it is not suitable to the traffic density modeling, because each data point $\boldsymbol{y}_n$ can have a different traffic density $\bar{{S}_i} (\boldsymbol{y}_n)$.  
In order to adapt the weight of each location $\boldsymbol{y}_n$ in determining the parameter values  according to the traffic density $\bar{{S}_i} (\boldsymbol{y}_n)$, as well as to capture the spatial diversity of the traffic load within the cellular network,    
a weighted expectation maximization (WEM) algorithm  is proposed to optimize the parameters in the traffic density model ${ {S}_i}(\boldsymbol{y})$. 

In the proposed WEM method, the initial value of each Gaussian center $\boldsymbol{\mu}_k$ is the location $\boldsymbol{y}_k$ that has the $k$-th highest traffic density in $\bar{{S}_i}(\boldsymbol{y})$. 
The initial variance $\boldsymbol{\Sigma}_k$ equals the identity matrix with the equal weight $\pi_k = \frac{1}{K}\sum_{\boldsymbol{y}}  \bar{{S}_i}(\boldsymbol{y}) $.  
Then, the WEM algorithm updates  $\{\pi_k,\boldsymbol{\mu}_k,\boldsymbol{\Sigma}_k \}_{k=1,\cdots,K}$ via an iterative approach.  
In the E step,  the percentage that each Gaussian function $k$ contributes to  the traffic density at location $\boldsymbol{y}_n$ is evaluated via  
$	v_{nk} = \frac{ \pi_k \mathcal{N}(\boldsymbol{y}_n | \boldsymbol{\mu}_k,\boldsymbol{\Sigma_k})}{\sum_{k=1}^K \pi_k \mathcal{N}(\boldsymbol{y}_n | \boldsymbol{\mu}_k,\boldsymbol{\Sigma_k})}$. 
Next, in the M step, the parameters of each Gaussian function will be updated in a weighted approach, where the mean $\boldsymbol{\mu}_k$ is recalculated via 
\begin{equation}
	\boldsymbol{\mu}_k = \frac{\sum_n  v_{nk}  \boldsymbol{y}_n   \bar{{S}_i} (\boldsymbol{y}_n)}{\sum_n v_{nk} \bar{{S}_i}(\boldsymbol{y}_n)},  
\end{equation}
which is a  sum of all locations $\boldsymbol{y}_n \in \mathcal{Y}$, weighted by the posterior probability $v_{nk}$ and the traffic density $\bar{{S}_i}(\boldsymbol{y}_n)$. Thus, a location $\boldsymbol{y}_n$ with a higher traffic density  $\bar{{S}_i} (\boldsymbol{y}_n)$ will have a higher weight in determining the value of $\boldsymbol{\mu}_k$, and the center of Gaussian $k$ will gradually be driven closer to the high-density locations.  
Similarly, the variance $\boldsymbol{\Sigma}_k$ and the linear coefficient $\pi_k$ of each Gaussian function is also updated, with weights $\bar{{S}_i}(\boldsymbol{y}_n)$, by 
\begin{equation}
	\boldsymbol{\Sigma}_k = \frac{\sum_n v_{nk} (\boldsymbol{y}_n - \boldsymbol{\mu}_k) (\boldsymbol{y}_n - \boldsymbol{\mu}_k)^T  \bar{{S}_i}(\boldsymbol{y}_n)  }{\sum_n v_{nk} \bar{{S}_i}(\boldsymbol{y}_n)}, ~~~ \pi_k = \frac{\sum_n v_{nk}\bar{{S}_i}(\boldsymbol{y}_n)}{\sum_k\sum_n v_{nk}\bar{{S}_i}(\boldsymbol{y}_n)}. 
\end{equation}  
Furthermore, similar to the EM approach, a WEM method will converge to a local optimum, which maximizes the weighted conditional log-likelihood function  \cite{zhang2018machine}. 

Although the EM and WEM methods have similar mathematical expressions, the physical meaning and iterative process are fundamentally different. 
First, different from the  unit sum-weight in (\ref{GMM}),  the weight-sum of the WEM method  represents the total volume of the downlink data demand, which can be any positive value. 
Second, when updating the Gaussian parameters, the proposed WEM method considers the traffic density at each location and assigns a  higher weight to the location with higher demand in the density model. 
In contrast, EM method associates each point with an equal weight. Thus, 
the spatial diversity of the traffic load cannot be properly captured. 
Therefore,  the proposed WEM approach expands the application range of the EM scheme, and can be seen as a general version of EM, which models the distribution with a variable weight at each data point.

The hotspot area $\mathcal{A}_i^c$ is a location set in which the  traffic destiny is much higher than other locations in $\mathcal{A}_i$.   
Given the traffic density model $S_i$,  the average traffic density in $\mathcal{A}_i$ is given by $ \bar{s_i} = \frac{1}{| \mathcal{A}_i  |} \int_{\boldsymbol{y} \in \mathcal{A}_i} S_i(\boldsymbol{y} )  \dif \boldsymbol{y}$, where $| \mathcal{A}_i  |$ denotes the area of $\mathcal{A}_i$. 
Then, 
by calculating the traffic density at each Gaussian center $\{ \boldsymbol{\mu}_k\}_{k=1,\cdots,K}$, 
the mean $\boldsymbol{\mu}_k^*$ with the highest traffic density is chosen, and its neighborhood area, where the traffic density is higher than $ \bar{s_i}$ forms the hotspot area $\mathcal{A}_i^c$.  
The downlink UEs within  $\mathcal{A}_i^c$ will be offloaded to the aerial cellular network. 
Based on  the traffic density model $S_i(\boldsymbol{y})$ and the hotspot area $\mathcal{A}_i^c$, the predicted data amount $d_i$ for a time interval $T$ can be calculated based on (\ref{dataDemand}). 

Given the downlink traffic demand $d_i$ and the UE distribution $f_i(\boldsymbol{y})$, all variables in (\ref{problem_BS}) have determined values, except for the unit payment $u_i$ and the transmit power $p_j$. 
Next, in order to to solve (\ref{problem_BS}), we will jointly decide the value of $(u_i,p_j)$, by designing the feasible contract between an overloaded BS $i$ with each UAV $j \in \mathcal{J}$.

\section{Association stage: Contract Design and UAV Allocation}\label{contractDesign}

\subsection{Contract design }\label{subsct:fc}   
Given the predicted  traffic demand ${d}_i$,  a BS $i \in \mathcal{I}$  can request  UAVs to offload the UEs within the hotspot area $\mathcal{A}_i^c$, so that the future downlink congestion can be alleviated.    
However, to employ a qualified UAV  to meet the downlink demand, each BS needs to carefully design the contract $\Phi_{i} = \{ (u_i(\theta_{ij}),p_{j}(\theta_{ij})) | \forall \theta_{ij} \in \Theta_i \}$ for UAVs of any type $\theta_{ij}$.   
The feasible contract satisfying the IR and IC conditions can guarantee 
that each UAV $j \in \mathcal{J}$ will accept the contract designed for its own type and provide the required downlink transmissions. 
To develop a feasible contract set, we first analyze the sufficient and necessary conditions for a feasible contract. 

\begin{proposition} \label{pro}
	[Necessary Condition]  For any $\theta_{ij},\theta_{ij}^{'} \in \Theta_i $, if $\theta_{ij}  > \theta_{ij}^{'}$, then $u_i(\theta_{ij}) \ge u_i(\theta^{'}_{ij})$ and $p_{j}(\theta_{ij} ) \ge p_{j}(\theta^{'}_{ij})$. 
\end{proposition}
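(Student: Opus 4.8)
The plan is to exploit the two incentive-compatibility (IC) inequalities of Definition 2 for the pair of distinct types and combine them. Fix $\theta_{ij},\theta_{ij}^{'} \in \Theta_i$ with $\theta_{ij} > \theta_{ij}^{'}$. Since the constant $M_{ij}$ cancels on both sides of the IC constraint, applying Definition 2 to the UAV of type $\theta_{ij}$ (which must weakly prefer its own item over the item meant for $\theta_{ij}^{'}$) gives
\[
\theta_{ij} u_i(\theta_{ij}) - p_{j}(\theta_{ij}) \ge \theta_{ij} u_i(\theta_{ij}^{'}) - p_{j}(\theta_{ij}^{'}),
\]
and applying it to the UAV of type $\theta_{ij}^{'}$ gives
\[
\theta_{ij}^{'} u_i(\theta_{ij}^{'}) - p_{j}(\theta_{ij}^{'}) \ge \theta_{ij}^{'} u_i(\theta_{ij}) - p_{j}(\theta_{ij}).
\]

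First I would add these two inequalities. The transmit-power terms cancel on both sides, and after grouping the payment terms one is left with $(\theta_{ij} - \theta_{ij}^{'})\,(u_i(\theta_{ij}) - u_i(\theta_{ij}^{'})) \ge 0$. Because $\theta_{ij} > \theta_{ij}^{'}$ the first factor is strictly positive, so dividing through forces $u_i(\theta_{ij}) \ge u_i(\theta_{ij}^{'})$, which is the claimed monotonicity of the unit payment.

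Next I would recover the monotonicity of the transmit power from the \emph{second} IC inequality alone (the one written for the lower type $\theta_{ij}^{'}$). Rearranging it to isolate the power gap yields $p_{j}(\theta_{ij}) - p_{j}(\theta_{ij}^{'}) \ge \theta_{ij}^{'}\,(u_i(\theta_{ij}) - u_i(\theta_{ij}^{'}))$. Since every type in $\Theta_i = [\frac{d_i}{\alpha T},\frac{d_i}{\alpha (1-\kappa_i) T}]$ is strictly positive (as $d_i > 0$ by constraint (\ref{con_other})), and the bracket is nonnegative by the payment monotonicity just established, the right-hand side is nonnegative. Hence $p_{j}(\theta_{ij}) \ge p_{j}(\theta_{ij}^{'})$, completing the argument.

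This is essentially the classical single-crossing / monotone-comparative-statics step, so no genuine obstacle arises; the only point demanding care is the ordering and the direction in which each IC inequality is invoked. The payment monotonicity must be derived first (via the symmetric sum of both constraints), because the power monotonicity relies both on the positivity of $\theta_{ij}^{'}$ and on the already-proven inequality $u_i(\theta_{ij}) \ge u_i(\theta_{ij}^{'})$. Invoking the IC constraint of the higher type instead would bound the power gap from above rather than from below and thus fail to deliver the required conclusion.
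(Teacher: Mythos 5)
Your proof is correct and follows essentially the same route as the paper's: both parts hinge on adding the two IC inequalities to obtain payment monotonicity, and then invoking the IC constraint of the lower type $\theta_{ij}^{'}$ together with $\theta_{ij}^{'}>0$ to get power monotonicity. The only difference is cosmetic---the paper wraps the first step in a proof by contradiction, whereas you factor the summed inequalities directly into $(\theta_{ij}-\theta_{ij}^{'})\bigl(u_i(\theta_{ij})-u_i(\theta_{ij}^{'})\bigr)\ge 0$, which is, if anything, slightly cleaner.
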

\begin{proof}
See Appendix \ref{AppPro1}. 
\end{proof}
Proposition 1 shows that for a typical UAV $j$,  if its type with respect to a typical BS $i$ increases from $\theta_{ij}^{'}$ to  $\theta_{ij}$, then it will receive a higher unit payment $u_i(\theta_{ij}) \ge u_i(\theta_{ij}^{'})$, and in return, it should provide a larger transmit power $p_j(\theta_{ij}) \ge p_j(\theta_{ij}^{'})$. 
Given that  $\theta_{ij} = \frac{d_i}{\alpha (T-t_{ij}) }$, a higher type $\theta_{ij}$ indicates either a higher downlink demand $d_i$, or a longer travel time $t_{ij}$. In the first case, if the downlink demand is higher, the employed UAV must increase the transmit power  to satisfy the larger traffic needs. Thus, $p_j(\theta_{ij})$ will increase. 
On the other hand, if UAV $j$ travels for a long time $t_{ij}$, it consumes more  energy on movement, which requires a higher unit payment $u_i(\theta_{ij})$ to compensate for the energy cost. 
Therefore, a  UAV of a higher type is required to provide more transmit power, and will be given a higher unit payment.  
The conclusion in Proposition \ref{pro} will lead to the necessary and sufficient conditions of a feasible contract, as shown next. 

\begin{theorem}\label{contracttheorem} 
	A contract set $\Phi_i = \{ (u_i(\theta_{ij}),p_{j}(\theta_{ij}))|\forall \theta_{ij} \}$ satisfies IR and IC  
	constraints, if and only if all the following three conditions hold:  
	(a) $\frac{\dif p_{j} (\theta_{ij}) }{\dif \theta_{ij}} \ge 0$ and $\frac{\dif u_i (\theta_{ij})}{\dif \theta_{ij}} \ge 0$,   
	(b) $\theta^{\text{min}}u_i(\theta^{\text{min}}) - p_{j}({\theta^{\text{min}}}) - M_{ij} \ge 0$,  
	(c) $\frac{\dif p_{j}(\theta_{ij})}{\dif \theta_{ij} }  = \theta_{ij} \cdot \frac{\dif u_i(\theta_{ij}) }{\dif \theta_{ij} }$.  
\end{theorem}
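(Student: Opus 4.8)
The plan is to prove the two directions of the equivalence separately, treating the rescaled utility that a UAV of true type $\theta_{ij}$ obtains by accepting the contract item designed for a (possibly different) type $\theta$, namely $g_{\theta_{ij}}(\theta) = \theta_{ij} u_i(\theta) - p_{j}(\theta) - M_{ij}$, as the central object. The key structural fact I would exploit is that $g_{\theta_{ij}}(\theta)$ is linear in the true type $\theta_{ij}$ and satisfies the single-crossing (Spence--Mirrlees) property $\frac{\partial^2 (\theta_{ij} u_i(\theta))}{\partial \theta_{ij}\,\partial \theta} = \frac{\dif u_i(\theta)}{\dif \theta} \ge 0$; this is what links monotonicity of the menu to incentive compatibility. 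Throughout I would assume $u_i(\theta_{ij})$ and $p_{j}(\theta_{ij})$ are differentiable in $\theta_{ij}$, treat $M_{ij}$ as the per-pair constant offset as in (\ref{newU_UAV}), and use the positivity $u_i(\theta_{ij}) > 0$ from constraint (\ref{con_other}).

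For necessity, condition (a) is exactly the differential form of the monotonicity already established in Proposition \ref{pro}: since $\theta_{ij} > \theta_{ij}'$ forces $u_i(\theta_{ij}) \ge u_i(\theta_{ij}')$ and $p_j(\theta_{ij}) \ge p_j(\theta_{ij}')$, the corresponding derivatives are non-negative. Condition (b) is simply the IR inequality evaluated at the smallest type $\theta^{\text{min}}$, hence trivially necessary. For (c) I would use that the IC condition says $\theta = \theta_{ij}$ maximizes $g_{\theta_{ij}}(\theta)$ over $\Theta_i$; the interior first-order stationarity condition $g_{\theta_{ij}}'(\theta)\big|_{\theta = \theta_{ij}} = \theta_{ij}\frac{\dif u_i(\theta_{ij})}{\dif \theta_{ij}} - \frac{\dif p_j(\theta_{ij})}{\dif \theta_{ij}} = 0$ yields (c) directly.

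For sufficiency, I would first establish IC. Substituting (c) into the derivative of $g_{\theta_{ij}}(\cdot)$ gives $g_{\theta_{ij}}'(\theta) = \theta_{ij}\frac{\dif u_i(\theta)}{\dif \theta} - \frac{\dif p_j(\theta)}{\dif \theta} = (\theta_{ij} - \theta)\frac{\dif u_i(\theta)}{\dif \theta}$. Because $\frac{\dif u_i}{\dif \theta} \ge 0$ by (a), this derivative is non-negative for $\theta < \theta_{ij}$ and non-positive for $\theta > \theta_{ij}$, so $g_{\theta_{ij}}$ attains its global maximum exactly at $\theta = \theta_{ij}$, which is precisely IC. I would then obtain IR by differentiating the truthful payoff $\tilde R(\theta_{ij}) = \theta_{ij} u_i(\theta_{ij}) - p_j(\theta_{ij}) - M_{ij}$ to get $\frac{\dif}{\dif \theta_{ij}}\tilde R(\theta_{ij}) = u_i(\theta_{ij}) + \theta_{ij}\frac{\dif u_i}{\dif \theta_{ij}} - \frac{\dif p_j}{\dif \theta_{ij}} = u_i(\theta_{ij}) \ge 0$ after cancellation via (c); hence $\tilde R$ is non-decreasing, its minimum over $\Theta_i$ occurs at $\theta^{\text{min}}$, and condition (b) forces that minimum to be non-negative, establishing IR for every type.

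The step I expect to be most delicate is the necessity of the \emph{equality} in (c): the first-order condition only certifies stationarity in the interior of $\Theta_i$, so I would need to argue that (c) persists at the endpoints $\theta^{\text{min}}$ and $\theta^{\text{max}}$ by continuity of the derivatives, and to rule out the possibility that global IC is compatible only with a one-sided inequality at the boundary. The companion subtlety, in the sufficiency direction, is that (a) and (c) are purely local/pointwise conditions, yet the single-crossing sign argument above upgrades them to optimality over \emph{all} of $\Theta_i$; this local-to-global passage is the crux that makes the three conditions genuinely equivalent to IR and IC rather than merely implied by them.
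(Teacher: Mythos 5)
Your proof is correct, and its necessity half coincides with the paper's own: condition (a) is read off from Proposition \ref{pro}, condition (b) is IR evaluated at $\theta^{\text{min}}$, and condition (c) is the first-order condition for truthful reporting, which the paper derives in exactly this way, only written as a squeeze between one-sided difference quotients (its inequalities (\ref{you}) and (\ref{zuo})) rather than as stationarity of $g_{\theta_{ij}}$. Where you genuinely diverge is the sufficiency half. You work infinitesimally: substituting (c) gives the misreport payoff the derivative $(\theta_{ij}-\theta)\frac{\dif u_i(\theta)}{\dif \theta}$, whose sign change at $\theta=\theta_{ij}$ yields IC by single crossing, and the envelope cancellation $\frac{\dif}{\dif \theta_{ij}}\tilde{R}(\theta_{ij}) = u_i(\theta_{ij}) \ge 0$ yields IR by monotonicity down to $\theta^{\text{min}}$, where (b) applies. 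The paper instead converts (c) into two-point ratio bounds, $\min\{\theta,\theta'\} \le \frac{p_j(\theta')-p_j(\theta)}{u_i(\theta')-u_i(\theta)} \le \max\{\theta,\theta'\}$, and verifies IR and IC by direct algebra from them. The two routes are equivalent in substance, since the paper's ratio bounds are precisely what one gets by integrating $\frac{\dif p_j}{\dif \theta} = \theta \frac{\dif u_i}{\dif \theta}$; but your version makes the local-to-global passage explicit and produces the standard envelope identity, whereas the paper asserts the ratio bounds without spelling out the integration (mean-value) step that justifies them. One further point in your favor: you explicitly flag that the first-order condition establishes (c) only in the interior of $\Theta_i$ and propose continuity of the derivatives to reach the endpoints; the paper's proof has the identical boundary gap (at $\theta^{\text{min}}$ only $\Delta\theta>0$ is admissible, giving a one-sided inequality) and passes over it silently.
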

\begin{proof}
	See Appendix \ref{AppTheo1}.
\end{proof}  
Theorem \ref{contracttheorem} gives the necessary and sufficient conditions for a contract set $\Phi_i$ to jointly satisfy the IC constraint in (\ref{con_IC}) and the IR constraint in (\ref{con_IR}).    
Therefore, each feasible solution of Theorem \ref{contracttheorem} can guarantee that a UAV only accepts the contract designed for its own type, and provides the required transmit power to meet the downlink demand.   
Here, we note that Theorem \ref{contracttheorem} results in a loose solution set. 
In essence, all of contracts from this solution set meet the necessary and sufficient conditions of the IC and IR requirements, 
and, thus, they are optimal in the contract-theoretic problem.  
Meanwhile, Theorem \ref{contracttheorem} 
provides each BS with more freedom to choose the feasible contract based on its real-time communication need. 
In order to minimize the communication overhead in the association stage, we aim to propose a contract with  the lowest complexity and the least  broadcast overhead.  
Therefore, to enable an efficient BS-UAV association, we propose the best contract with  $ \frac{\dif u_{i}(\theta_{ij})}{\dif \theta_{ij} }= \gamma_i >0$.   
Consequently, the feasible contract that is proposed by BS $i$ is given as follows.  
\begin{lemma}\label{lemma1}
	Under the condition that $ \frac{\dif u_{i}(\theta_{ij})}{\dif \theta_{ij} }= \gamma_i $, the feasible contract between  BS $i$  and a UAV $j$ of type $\theta_{ij}$ is $\phi_{ij} = (u_i,p_j) = (\gamma_i \theta_{ij}, \gamma_i \theta_{ij}^2/2) $, where $ \gamma_i = \frac{ 2\alpha^2 T^2 p_h}{d_i^2}$. 
\end{lemma}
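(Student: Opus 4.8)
The plan is to read Lemma \ref{lemma1} as an exercise in integrating the characterization already supplied by Theorem \ref{contracttheorem}, specialized by the extra restriction $\frac{\dif u_i(\theta_{ij})}{\dif \theta_{ij}} = \gamma_i$. First I would integrate this restriction directly: a constant slope $\gamma_i$ gives $u_i(\theta_{ij}) = \gamma_i \theta_{ij} + c_1$, and I would fix the integration constant $c_1 = 0$, so $u_i(\theta_{ij}) = \gamma_i \theta_{ij}$. The choice $c_1 = 0$ is the one consistent with the stated goal of a contract of least complexity and least broadcast overhead, and it is cheapest for BS $i$, whose payment $u_i d_i$ in (\ref{U_BS}) increases with $c_1$.

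Next I would substitute this into condition (c) of Theorem \ref{contracttheorem}, namely $\frac{\dif p_j}{\dif \theta_{ij}} = \theta_{ij}\,\frac{\dif u_i}{\dif \theta_{ij}} = \gamma_i \theta_{ij}$, and integrate once more to get $p_j(\theta_{ij}) = \gamma_i \theta_{ij}^2/2 + c_2$; again I would set $c_2 = 0$ to keep the form minimal and the required transmit power (hence the UAV's energy demand) as small as the characterization allows. Condition (a) of Theorem \ref{contracttheorem} is then immediate, since $\frac{\dif u_i}{\dif \theta_{ij}} = \gamma_i > 0$ and $\frac{\dif p_j}{\dif \theta_{ij}} = \gamma_i \theta_{ij} \ge 0$ on $\Theta_i$.

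The value of $\gamma_i$ comes from condition (b), the binding individual-rationality constraint at the lowest type. The key observation is that $\theta^{\min} = \frac{d_i}{\alpha T}$ corresponds, through $\theta_{ij} = \frac{d_i}{\alpha(T - t_{ij})}$, to a vanishing travel time $t_{ij} = 0$, so that $M_{ij} = \frac{m t_{ij}}{T - t_{ij}} + p_h$ reduces to exactly $p_h$ at $\theta^{\min}$. Substituting $u_i(\theta^{\min}) = \gamma_i \theta^{\min}$, $p_j(\theta^{\min}) = \gamma_i (\theta^{\min})^2/2$ and $M_{ij} = p_h$ into $\theta^{\min} u_i(\theta^{\min}) - p_j(\theta^{\min}) - M_{ij} \ge 0$ collapses the left-hand side to $\frac{\gamma_i}{2}(\theta^{\min})^2 - p_h$. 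To obtain the best (cheapest) feasible contract I would impose equality, set $\frac{\gamma_i}{2}(\theta^{\min})^2 = p_h$, and solve $\gamma_i = 2 p_h/(\theta^{\min})^2 = 2\alpha^2 T^2 p_h/d_i^2$, the claimed expression. Having verified (a)--(c), I would then invoke the sufficiency direction of Theorem \ref{contracttheorem} to conclude that $\phi_{ij} = (\gamma_i \theta_{ij}, \gamma_i \theta_{ij}^2/2)$ is feasible, so no separate check of IR across the whole of $\Theta_i$ is needed.

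I expect the only non-routine input — everything else being integration — to be the evaluation of $M_{ij}$ at $\theta^{\min}$ (recognizing that the lowest type is the one with $t_{ij} = 0$, whence $M_{ij} = p_h$) together with the decision to tighten IR to equality. The two integration constants and the binding of (b) are precisely the degrees of freedom left open by the deliberately \emph{loose} solution set of Theorem \ref{contracttheorem}; pinning them down by minimality and IR-tightness is exactly what singles out the specific contract asserted in the lemma.
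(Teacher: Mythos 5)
Your proposal is correct and follows essentially the same route as the paper's proof: derive $u_i = \gamma_i\theta_{ij}$ and $p_j = \gamma_i\theta_{ij}^2/2$ from the constant slope and condition (c), note (a) holds trivially, and pin down $\gamma_i$ from condition (b) at $\theta^{\min} = \frac{d_i}{\alpha T}$ (i.e., $t_{ij}=0$, so $M_{ij}=p_h$), choosing the smallest $\gamma_i$ satisfying $\gamma_i \ge \frac{2 p_h \alpha^2 T^2}{d_i^2}$. Your additional care with the integration constants and the explicit appeal to the sufficiency direction of Theorem \ref{contracttheorem} only make explicit what the paper leaves implicit.
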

\begin{proof}
	Based on $ \frac{\dif u_{i}(\theta_{ij})}{\dif \theta_{ij} }= \gamma_i $ and condition (c) of Theorem 1, we have $u_i = \gamma_i \theta_{ij}$, $p_j = \gamma_i \theta_{ij}^2/2$, and condition (a) holds naturally. 
	For BS $i$, the minimal UAV type  is $\theta^{\text{min}} = \frac{d_i}{\alpha T}$, when $t_{ij}=0$. 
	Therefore, condition (b) becomes $\gamma_i \ge \frac{2M_{ij}}{{\theta^{\text{min}}}^2 } = \frac{ 2\alpha^2 T^2 p_h}{d_i^2} $. 
	Therefore, we set $ \gamma_i = \frac{ 2 \alpha^2 T^2 p_h}{d_i^2} $. 
\end{proof} 
Therefore, for each overloaded  BS $i$, the designed contract is $ (u_i,p_j) = (\gamma_i \theta_{ij}, \gamma_i \theta_{ij}^2/2) $ with $ \gamma_i = \frac{ 2\alpha^2 T^2 p_h}{d_i^2} $ for each UAV in $\mathcal{J}$ with any type $\theta_{ij}$. 

\subsection{The optimal UAV association under the feasible contract}

Given the feasible contract set $ \{ (\gamma_i \theta_{ij}, \gamma_i \theta_{ij}^2/2) | \forall \theta_{ij} \}$, the utility $R_{ij}(\theta_{ij})$ of each candidate UAV $j \in \mathcal{J}$ and the utility $U_{ij}(\theta_{ij})$ of the requesting BS $i$ can be jointly determined.  
Then, the optimization problem in (\ref{problem_BS}) becomes 
\begin{subequations}\label{opt_new}
	\begin{align} 
	\max_{\substack{ j \in \mathcal{J}}} \quad &  U_{ij} ( \theta_{ij}  ), \label{newobj1} \\
	\textrm{s. t.} \quad    
	& p_{ij}(\boldsymbol{x}_{ij}^{*},\rho^c_i) \le p_j(\theta_{ij}) \le \min \{p_{ij}^{\text{max}},p_{\text{max}}  \}, \label{newcon_power}  \\
	& t_{ij} \le \kappa_i T  \label{newcon_time}. 
	\end{align}
\end{subequations} 
Therefore,  BS $i$ aims to find a UAV of the optimal type $\theta_{ij}^*$ that maximizes its utility in (\ref{newobj1}), while satisfying  (\ref{newcon_power}) and (\ref{newcon_time}).  
In the association stage, after BS $i$ sends the request  signal,  each UAV $j$ will respond with its type $\theta_{ij}$. 
Based on the derivation $\frac{\dif U_{ij}(\theta_{ij})}{\dif \theta_{ij}}<0$,  the optimal UAV is $j^{*} =  \arg \max_{j \in \mathcal{J}_i} U(\theta_{ij}) = \arg \min_{j \in \mathcal{J}_i} \theta_{ij} $, where $\mathcal{J}_i = \{ j|  p_{ij}(\boldsymbol{x}_{ij}^{*},\rho^c_i) \le \frac{\gamma_i}{2} \theta_{ij}^2  \le \min \{p_{ij}^{\text{max}},p_{\text{max}}\} , t_{ij} \le \kappa_i T \}$.    
Thus, the qualified UAV with a smallest type is the optimal solution. 
The complete process of the predictive UAV deployment is summarized in Algorithm \ref{algo_all}.   

\begin{algorithm}[t!]  
	\caption{Proposed process for the UAV predictive deployment } \label{algo_all} 
	\begin{algorithmic}\small 
		\State For each BS $i \in \mathcal{I}$, once downlink communication exceeds the network capacity, \textbf{do}:	 \\   
		\textbf{1. Learning stage:} \\
		\quad (a) BS $i$ collects  $\mathcal{S}_i$ to model the UE distribution $f_i(\boldsymbol{y})$, estimate  the downlink  traffic density  ${S}_i(\boldsymbol{y})$, and detect\\  
		\quad \quad  the hotspot area $\mathcal{A}_i^c$ based on the WEM approaches proposed in Section \ref{sec:ml}. \\
		\quad (b)  BS $i$ calculates the downlink demand $d_i$ of the offloaded UEs via (\ref{dataDemand}),  estimates the number $N$ of required \\
		\quad \quad  UAVs through (\ref{multiUAV}), and computes the service point $\boldsymbol{x}_{ij}^{*}$ for each target UAV $j$, based on the solution in \cite{mozaffari2016optimal}.\\  
		\textbf{2. Association stage:}  for $n = 1, \cdots, N$: \\
		\quad (a) BS $i$ listens to the broadcast channel. If the channel is occupied, wait; otherwise, BS $i$ broadcasts the \\
		\quad \quad request signal with $d_i(n),\boldsymbol{x}_{ij}^{*}(n)$, $\kappa_i$, and $\Phi_{i}(n) = \{ \gamma_i \theta_{ij},\frac{\gamma_i}{2} \theta_{ij}^2 | \forall \theta_{ij} \}$, where $\gamma_i = \frac{ 2 \alpha^2 T^2 p_h}{{d_i(n)}^2} $;\\ 
		\quad (b) Each UAV $j \in \mathcal{J}$ listens the broadcast channel. After receiving the request from BS $i$, each UAV calculates \\ 
		\quad \quad  the movement time $t_{ij}$, its UAV type $\theta_{ij}$ with respect to BS $i$, and the available transmit power $p_{ij}^{\text{max}}$ after\\
		\quad \quad  arriving at $\boldsymbol{x}_{ij}^{*}$. If $p_{ij}^{\text{max}}\ge \frac{\gamma_i}{2} \theta_{ij}^2$ and $t_{ij}\le \kappa_i T$, UAV $j$ replies  $\theta_{ij}$ to BS $i$; otherwise, ignore. \\  
		\quad (c)  BS $i$ identifies the feasible UAV set $\mathcal{J}_i$, and employs the optimal UAV $j^{*} = \arg \min_{j \in \mathcal{J}_i} \theta_{ij}$.\\  
		\quad (d)  If $n=N$, BS $i$ releases the broadcast channel; otherwise, go back to 2(a). \\
		\textbf{3. Movement stage:} The employed UAV $j^{*}$ starts to move towards the service point of the requesting BS $i$.  \\ 
		\textbf{4. Service stage:} \\
		\quad (a) BS $i$ pays $\gamma_i \theta_{ij^{*}} d_i$, and offloads the UEs within $\mathcal{A}_i^c$ to UAV $j^{*}$.  \\
		\quad (b) UAV $j^{*}$ provides the downlink service with a transmit power $p_{j^{*}} = \frac{\gamma_i}{2}\theta_{ij^{*}}^2$ for a service time $T - t_{ij^{*}}$. \\
		\textbf{End} 
	\end{algorithmic}
\end{algorithm}

Compared with conventional UAV deployment, the contract-based optimization has three advantages. 
First, the proposed method not only reveals that the closest UAV is the optimal solution, but it also optimally determines the amount of the payment that the BS should offer the UAV, such that the utility of the BS can be maximized and the utility of the UAV is non-negative.  
Second, based on IC constraint, each UAV will receive the highest utility by accepting the contract designed for its real type. 
Thus, the use of contract theory allows us to capture the economic incentive of each UAV, forcing it to truthfully tell the requesting BS with its actual type, which is unknown to the BS a priori. 
Therefore, the proposed contract approach guarantees a truthful information exchange between the BS and UAV operators,  which the traditional optimization method cannot achieve. 
In the end, the proposed algorithm is more efficient for practical implementation, due to less information exchange in the association stage.  
In contrast, conventional optimization techniques will require all necessary information from all UAVs for solving the centralized association problem. 
Hence, compared with traditional optimization methods, our proposed algorithm reduces the  communications overhead, exhibits a  lower communication overhead, 
and  ensure a truthful information exchange between the BS and UAV operators.

\section{Simulation results and Analysis}\label{sec:simulation}

\subsection{Simulation parameters}
For our simulations, we consider a UAV-assisted wireless network in a dense urban environment, operating at the $2$ GHz frequency with a downlink bandwidth of $20$ MHz. 
The parameters in the LOS probability model are $a = 9.6$ and $b = 0.28$ \cite{al2014optimal}. 
The Gaussian parameters of the additional air-to-ground path loss are $\mu_{\text{LOS}}=1.6$ and $\sigma_{\text{LOS}} = 8.41$ for the LOS link while $\mu_{\text{NLOS}}=23$ and $ \sigma_{\text{NLOS}} =33.78$  for the NLOS case \cite{al2014modeling}.   
For the UAV parameters, based on the specifications in  \cite{DJI3},  
we set the mobility power $m=20$ W with an average moving speed of $5$ m/s, and  the hovering power is $p_h=16$ W.  
The maximal on-board energy of each UAV is $25$ Wh, and the battery recharge takes $10$ minutes.    
The maximum downlink transmit power  is $p_{\text{max}}=20$ W, and the unit cost for on-board energy is  $\alpha = 1.2$.  
For each UE, the  noise power spectral density is $-174$ dBm/Hz, and the data service per bit  is $\beta =10^{-7}$.   
For the UAV deployment process,  we set  $\Delta T=1$ second,  the learning duration $\tau=2$ minutes, and the service time $T = 18$ minutes.  
The ratio of efficient transmission in each time slot is $\eta = 90\%$, and the maximum ratio of the UAV's movement duration over the time interval is $\kappa_i = 0.1$. 

\subsection{Dataset description and preprocessing} 

\begin{figure*}[!t]
	\begin{center}
		\vspace{-1cm}
		\begin{subfigure}{.49\textwidth}
			\centering
			\includegraphics[width=8cm]{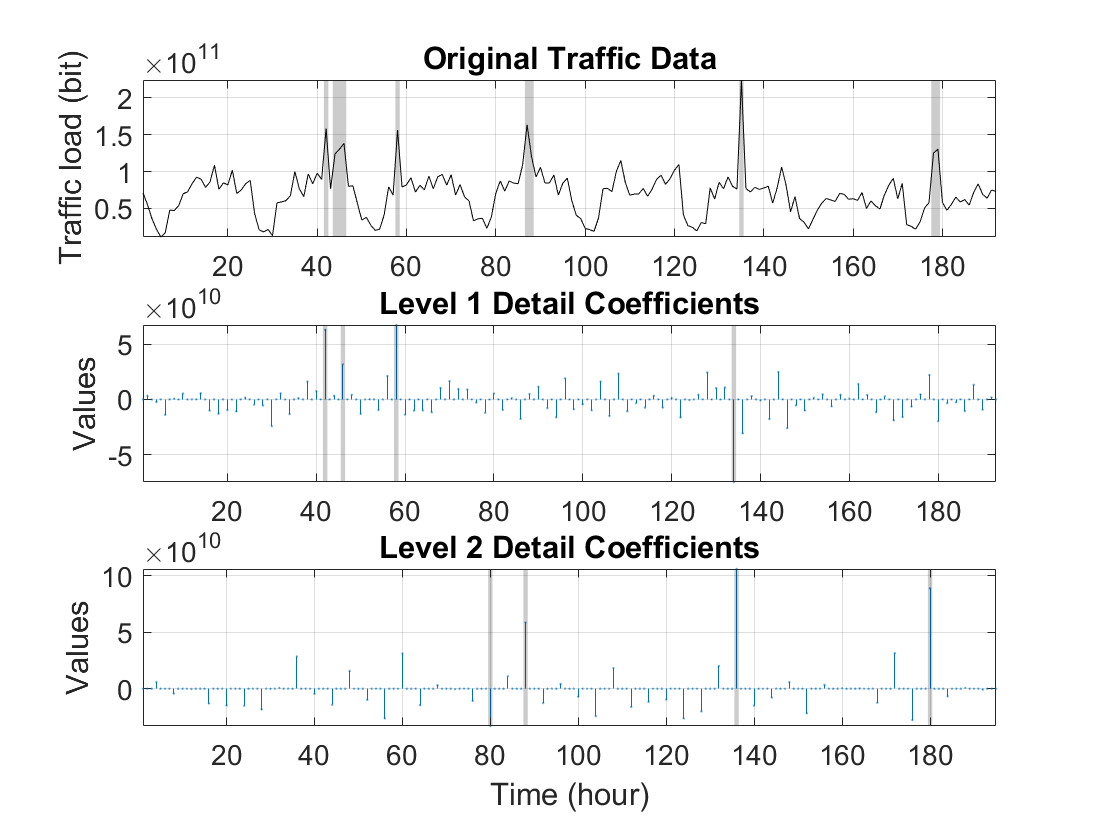}\vspace{0 cm}
			\caption{\label{ML:dwt} Two-level DWT components. }
		\end{subfigure}
		\begin{subfigure}{.49\textwidth}
			\centering
			\includegraphics[width=8cm]{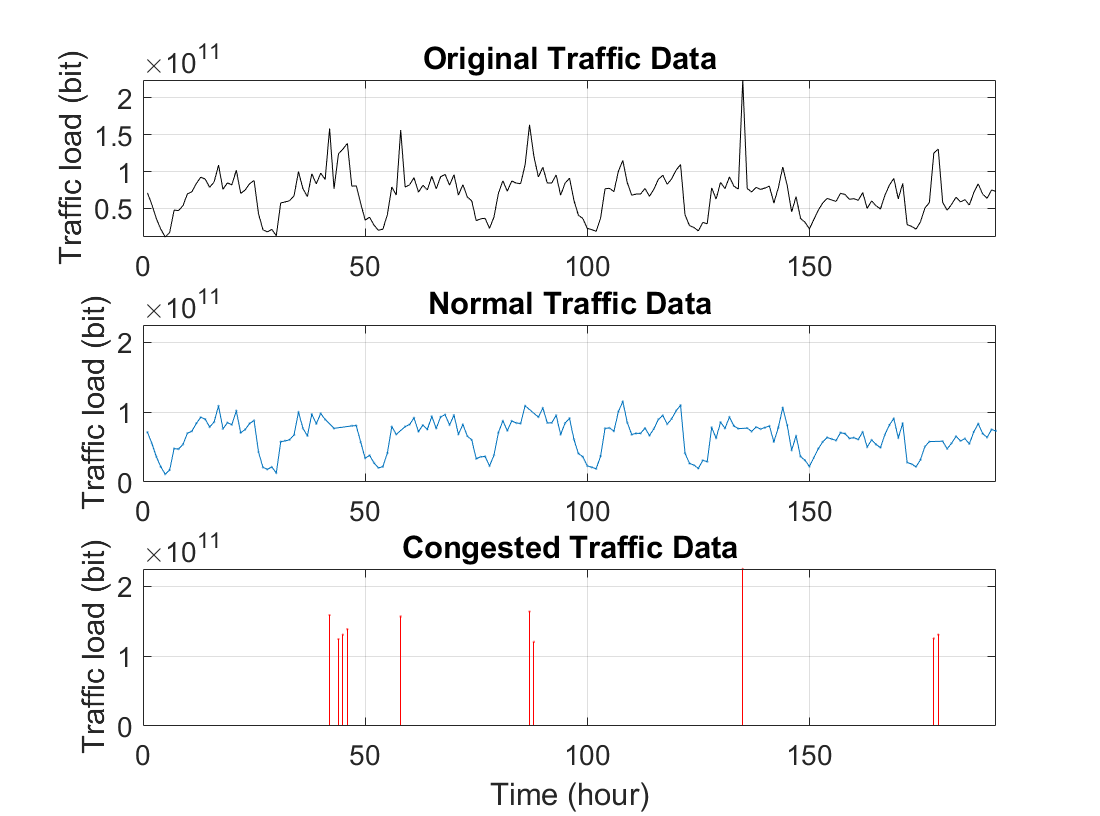}\vspace{0 cm}
			\caption{\label{ML:wtResult} Normal traffic states and potential congestion events. }
		\end{subfigure}
		\vspace{-0.3 cm} 
		\caption{\small{\label{anamolydetection} Two-level DWT is applied to detect the cellular traffic congestion from a city level. 
			} 
		}
	\end{center}
	\vspace{-1.2 cm}
\end{figure*}

An open-source dataset ``city-cellular-traffic-map" in \cite{cityCellularTrafficMap} is used for the modeling, training, and testing of the proposed UAV deployment framework. 
The dataset collects HTTP traffic data through the cellular networks during each hour within a middle-sized city of China from August 19 to August 26, 2012. The dataset consist of two parts. 
One lists the identification number (ID) and the location in longitude and latitude of each BS, and the other collects the number of UEs, packets and traffic data that each BS transmits to downlink UEs during each hour. 
In order to identify  hotspot events in the dataset, we apply the discrete wavelet transform (DWT) to the hourly cellular traffic in the city level. 
As shown in the upper figure of Fig. \ref{anamolydetection}, the  cellular traffic within the city area presents a conspicuously periodic pattern, with several sudden and erratic surges. 
DWT processes the time-serial data by analyzing both the value and frequency components,  where the lower-frequency component defines the long term trend, and the higher-frequency component represents the small-scale rapid variation. 
A hotspot event usually causes a steep surge in the traffic amount. Therefore, such rapid change can be captured by DWT in the higher frequency domain. 
As shown in Fig. \ref{ML:dwt}, a two-level DWT is applied to detect the frequency change of cellular traffic, and the gray bars mark the time points when the traffic amount has a sudden increase.  
Based on the result, the dataset is separated into the normal traffic data  and the potential congested traffic, as given in Fig. \ref{ML:wtResult}. 
Here, we find a time window from $42$ to $47$, which is 18 to 23 p.m. on August 20, that shows a continuously high cellular traffic amount, and the hotspot event is highly likely to happen during this period. Therefore, the traffic data from 42 to 47 are used for  the predictive UAV deployment in the following analysis. 

However, the data in \cite{cityCellularTrafficMap} does not include the location information of each UE, or the service area of each BS.  
To identify the UE distribution and the  traffic density, the location and time labels are generated and attached to each  transmission record  via the following  approach.  
First, the service area $\mathcal{A}_i$ of each BS $i$ is partitioned, based on the closest-distance principle.   
Next, we use the total packet number to denote the number of downlink transmissions. 
Furthermore, we note that the original time label $t$ in \cite{cityCellularTrafficMap} is based on one hour, which is too coarse to enable our analysis.  
To extract the estimated data with a desired duration, a new label with a finer time grain of one second is randomly generated and attached to each traffic record.  
Then, given $\tau= 2$ and $T = 18$, we divide each hour evenly into three intervals, such that the cellular data during first two minutes of each interval is used to model the UE distribution and downlink traffic, and data from the following $18$ minutes is used to estimate the UAV's transmission performance. 
Eventually, the location label $\boldsymbol{y}_n$ of each traffic record is generated by a GMM with random parameters to which we add a zero-mean Gaussian noise with a standard deviation of three meters. 
With additional location and time labels, the dataset is suitable for the studied problem. 

\subsection{Performance of the cellular traffic prediction }

\begin{figure*}[!t]
	\begin{center}
		\vspace{-1cm}
		\begin{subfigure}{.49\textwidth}
			\centering
			\includegraphics[width=8cm]{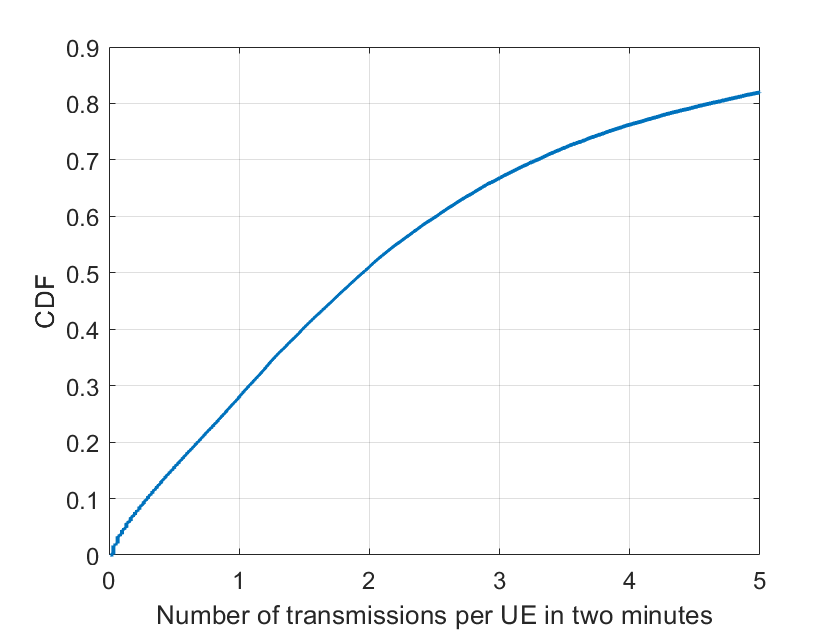}\vspace{0 cm}
			\caption{\label{sim:transPerUE2min} Number of transmissions per UE per two minutes. }
		\end{subfigure}
		\begin{subfigure}{.49\textwidth}
			\centering
			\includegraphics[width=8cm]{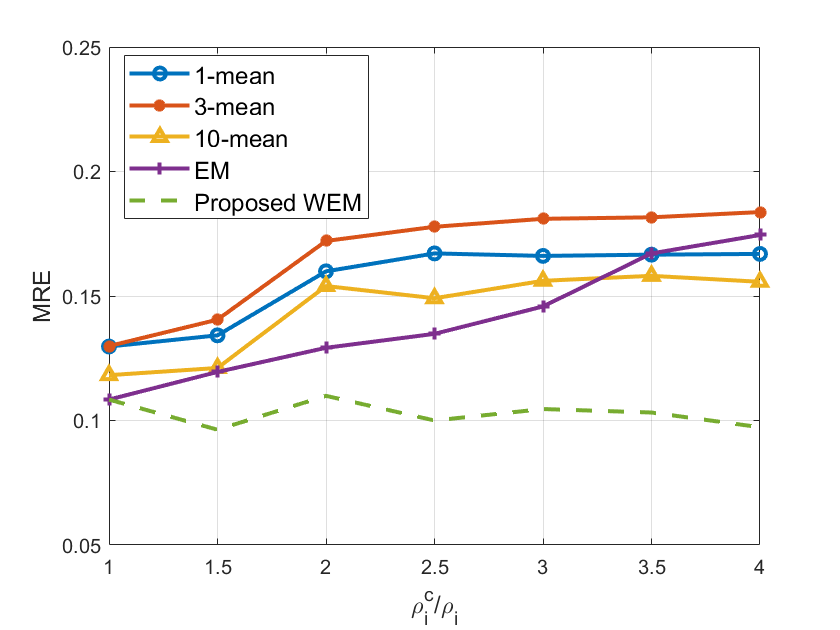}\vspace{0 cm}
			\caption{\small\label{sim:mre}{  MRE of the WEM approach and two baselines.} }  
		\end{subfigure}
		\vspace{-0.3 cm} 
		\caption{\small{\label{sim:learng}{  Statistical results and prediction errors in the learning stage. }
			} 
		}
	\end{center}
	\vspace{-1.2 cm}
\end{figure*}


Fig. \ref{sim:transPerUE2min} shows that over $70\%$ UEs receive, on average,  one packet within every two minutes. 
Thus, the transmission record  $\mathcal{S}_i$ that is collected during the learning stage ($\tau  = 2$ minutes) is a representative training datase. 
In this simulation,  the proposed WEM approach is applied to predict the data demand $d_i$, while the actual traffic demand $d_i^{\text{actual}}$ is calculated by summing up the real transmission amount within $\mathcal{A}_i^c$. 
Here, the mean relative error (MRE) is the metric to evaluate the prediction performance, where $\delta_{\text{MRE}} =  \mathbb{E}_{i,t}[\frac{|d_i - d_i^{\text{actual}}| } {d_i^{\text{actual}}}] $.   
Meanwhile, we introduce the  EM and $k$-mean methods as baselines. 
First, the EM method has been used in Section \ref{UEdistribution} for modeling the UE distribution $f_i(\boldsymbol{y})$. 
Here, to predict the traffic demand using the EM method, we have $ d_i^{\text{EM}} = T \cdot \mathbb{E}_n(s_n) \cdot  \int_{\boldsymbol{y} \in \mathcal{A}^{c}_i} f_i(\boldsymbol{y}) \dif \boldsymbol{y}$,  where $\mathbb{E}_n(s_n)=\frac{ \sum_n s_n \Delta t}{\tau}$ is the time-average data rate of all UEs, and $\int_{\boldsymbol{y} \in \mathcal{A}^{c}_i} f_i(\boldsymbol{y}) \dif \boldsymbol{y}$ is the percentage of UEs within the hotspot area.   
{\color{black} Note that, this is a commonly-used approach to estimate cellular data demand using the UE distribution and the average rate requirement per UE in the cellular network \cite{ mozaffari2018tutorial}.   } 
The $k$-mean method predicts the traffic density by averaging the local traffic density from $k$ closest neighbors.   

Fig. \ref{sim:mre} shows the prediction MRE of the WEM, EM, and $k$-mean methods, where $k = 1$, $3$ and $10$,  as the average data demand $\rho_i^c$ of the hotspot UEs increases.   
Note that, $\rho_i^c = \frac{1 }{Q^c_i}\int_{\boldsymbol{y} \in \mathcal{A}^c_i } S_i(\boldsymbol{y}) \dif \boldsymbol{y}$ is the average data rate per UE within the hotspot area, and  $\rho_i = \frac{1}{Q_i}\int_{\boldsymbol{y} \in \mathcal{A}_i } S_i(\boldsymbol{y}) \dif \boldsymbol{y} $ is the average rate demand of all UEs within the cellular network. 
When $\frac{\rho^c_i}{ \rho_i}=1$, each hotspot UE will have the same data demand as the other UEs. 
In this case, the WEM and EM approaches yield a similar prediction accuracy with an MRE of $11\%$, and the prediction errors of $k$-mean methods are between $12\%$ and  $12.5\%$. 
Note that, a prediction error of $11\%$ yields lower than $0.1$ W of deviation on the value of $p_{ij}(\boldsymbol{x}_{ij}^{*},\rho_i)$. 
Clearly, this is a very small value compared to the hovering  and transmit powers of a typical UAV.    
When the traffic load within different regions of the cellular network becomes more uneven, the prediction error of WEM remains the same, while the errors of the EM and $k$-mean methods gradually increase above $15.5\%$. 
Clearly, for $\frac{\rho^c_i}{ \rho_i} > 1$, the proposed WEM approach outperforms all other baselines.

In the WEM approach, the traffic density $\bar{{S}_i}(\boldsymbol{y})$ of each location $\boldsymbol{y}$ is considered when optimizing the prediction parameters.  Therefore, the spatial feature of downlink transmissions can be accurately captured, and the performance of WEM does not decrease when the traffic load in the cellular system becomes uneven.  
However, the EM model only considers the location information, but ignores the downlink rate of each transmission. Therefore, when the traffic demand shows distinct patterns in different regions, the EM method fails to capture the spatial diversity, and its prediction error increases significantly.  
Given that the $k$-mean method predicts the cellular traffic by averaging data from $k$ closest neighbors,  it captures the traffic spatial difference from local information.  
However, as the cellular traffic becomes more uneven, the local information is more sensitive to the noise, and thus, the prediction errors of $k$-mean methods increase, as $\rho_i^c$  increases.  
By comparing different $k$-mean algorithms, we find that  $3$-mean achieves the worst performance, because information from three neighbors is not sufficient to cancel out the noise.   
The simulation results also show that  $10$-mean yields the best performance among all $k$-mean methods. 

\subsection{The Impact of the UAV type on the utilities }
 
\begin{figure*}[!t]
	\begin{center}
		\vspace{-1cm}
		\begin{subfigure}{.49\textwidth}
			\centering
			\includegraphics[width=9cm]{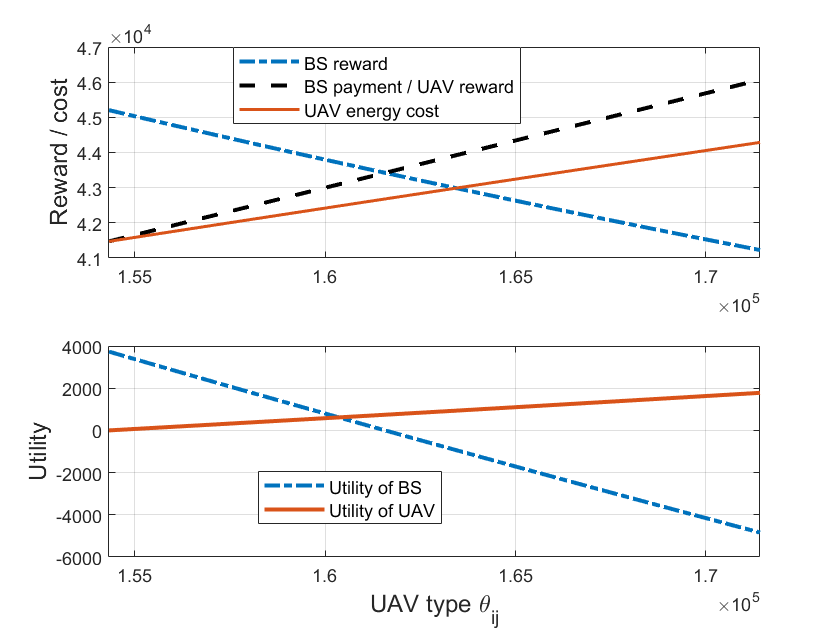}\vspace{0 cm}
			\caption{\label{sim:threeparts}{  Costs, rewards and overall utilities of the associated BS and UAV, given different UAV types. }}
		\end{subfigure}
		\begin{subfigure}{.49\textwidth}
			\centering
			\includegraphics[width=9cm]{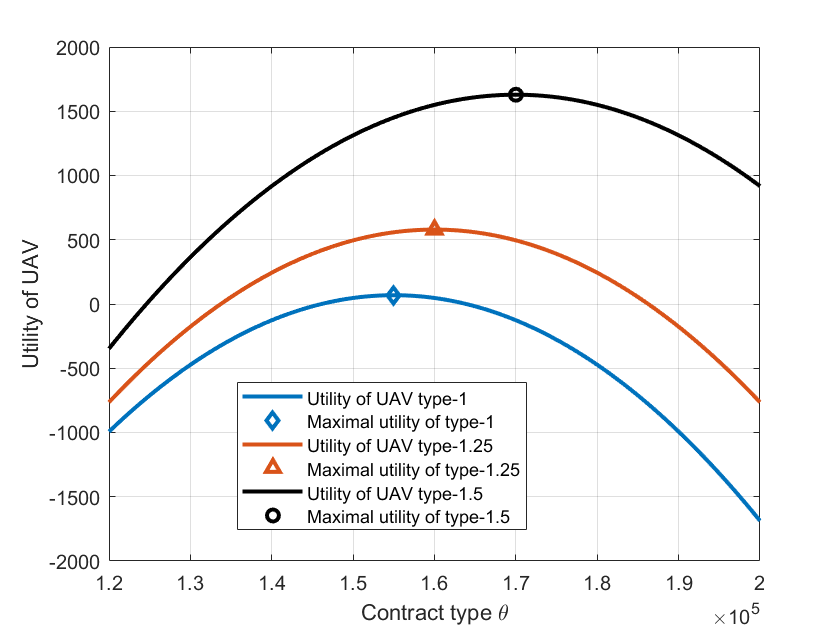}\vspace{-0.1 cm}
			\caption{\label{sim:IC} {  Utilities of UAVs, given different contract types. }} 
		\end{subfigure} 
		\vspace{-0.2 cm}
		\caption{\small{ \label{utility_type} {  As the UAV type increases, the transmit power and the unit payment both increase. However, the overall utilities of the associated BS and UAV will decrease. 	}}	 
		}
	\end{center}
	\vspace{-1cm}
\end{figure*}

In this section, we investigate the impact of the UAV type on the utilities. 
The contract is designed based on Lemma \ref{lemma1} by a BS with ID 7939, using the data from time $42$ in \cite{cityCellularTrafficMap}.  
Fig. \ref{sim:threeparts} shows the relationship between the UAV type and the reward, cost, as well as the overall utilities of the requesting BS  and the deployed UAV, respectively.  
First, as the UAV type $\theta_{ij}$ becomes larger,  the BS's reward $\beta B_{ij}(p_j)$ from the downlink UEs will decrease. 
Although the transmit power $p_j(\theta_{ij})(\theta_{ij})$ becomes higher given a larger $\theta_{ij}$,
a  UAV with a higher type  must travel for a longer time $t_{ij}$ before its service. Thus, the downlink data transmission of the UAV becomes lower for a larger  $\theta_{ij}$.
Meanwhile, based on (\ref{aerialCapacity}), we have $\frac{\dif B_{ij}(\theta_{ij})}{ \dif \theta_{ij}} <0$. 
Therefore, 
a higher UAV type $\theta_{ij}$ leads to a lower BS's reward.   
Furthermore, a higher UAV type increases the payment $u_i(\theta_{ij})d_i$ from BS $i$ to UAV $j$, and, thus, the utility of BS $i$ will be lower. 
For the deployed UAV $j$, a larger $\theta_{ij}$ results in a higher reward $u_i(\theta_{ij})d_i$ from BS $i$, and the increase of the UAV's reward is faster than the energy cost. Therefore,  as shown in Fig. \ref{sim:threeparts}, 
the utility of the deployed UAV will increase, as its type $\theta_{ij}$ becomes larger.    
Meanwhile, Fig. \ref{sim:threeparts} shows that the UAV's utility is always non-negative. Therefore,  the IR condition holds in the designed contract. 
Fig. \ref{sim:IC} investigates the impact of the contract type on the UAV's utility. 
The utilities of three UAVs, where their actual types are $1 \times 10^5$ (type-1), $1.25 \times 10^5$ (type-1.25), and  $1.5 \times 10^5$ (type-1.5),  
is given, 
when they accept different kinds of contracts from BS 7939. 
As shown in Fig. \ref{sim:IC},  the maximum utility of each UAV is achieved when the accepted contract is of its own type.  
Thus,  simulation results show that the IC condition holds in the designed contract set. 

An interesting observation on the utility function is that the prediction error of $d_i$ does not cause small fluctuations on the utility value of the BS or the employed UAV.  
Given the transmit power as $p_j =  \gamma_i \theta_{ij}^2/2 = \frac{mT^2}{4\alpha(T-t_{ij})^2}$ and the total payment from BS $i$ to UAV $j$ as $u_id_i = \gamma_i \theta_{ij} d_i = \frac{mT^2}{2(T-t_{ij})}$,  $d_i$ no longer appears in the utility formulas, and, thus, an inaccuracy in $d_i$ will not impact the utility functions in (\ref{U_BS}) and (\ref{U_UAV}).  
The main effect of  $d_i$ in the predictive UAV deployment is to determine the minimum required transmit power  $p_{ij}(\boldsymbol{x}_{ij}^{*},\rho^c_i)$.  
If the predicted demand $d_i$ is much lower than the real data demand, then $p_{ij}(\boldsymbol{x}_{ij}^{*},\rho^c_i)$ will be smaller. In consequence, some UAVs without enough energy may be inappropriately  considered to be a qualified choice, and might be employed. 
On the other hand, if $d_i$ is much higher than the actual demand, some qualified UAVs with enough power may be excluded from the candidate set $\mathcal{J}_i$. 
Both cases can lead to a suboptimal solution to (\ref{problem_BS}).  
However, as long as the error on $d_i$ causes no change to the association result,  the utilities of the BS and UAV will always be accurate.     
Based on this observation, the proposed  approach is highly robust to  prediction errors.  

\subsection{Evaluation of the predictive UAV deployment} 
 
In this section, we evaluate the  performance of the proposed UAV deployment method with four metrics, which are the  downlink capacity,  energy consumption,  service delay of the employed UAVs, and the utilities of the BS and  UAV operators. 
Meanwhile, for comparison purposes,  an event-driven deployment of the closest UAV and an event-driven deployment of a UAV with the maximal on-board energy are introduced as two baselines.  
In both baseline approaches, the target UAV is requested by the overloaded BS and deployed, after the downlink congestion occurs, without the prediction on traffic demand. 
The optimal location of the deployed UAV  is  determined after the UAV arrives at the service area, so as to maximize its downlink transmission rate \cite{mozaffari2016optimal}.  
Meanwhile, in both baseline approaches,  there is no contract design to determine the cost and payment between the BS to its employed UAV. 
Instead, the employed UAV $j$  provides the downlink service to the best of its power ability, where  $p_j =  \min \{p_{ij}(\boldsymbol{x}_{ij}^{*},\rho_i),  p_{ij}^{\text{max}}, p_{\text{max}} \}$, 
and  the unit payment $u_i$ from BS $i$ to the employed UAV is a fixed price $\beta$, which equals to the unit payment from the UEs to the BS per bit of data service.

\begin{figure*}[!t]
	\begin{center}
		\vspace{-1cm}
		\begin{subfigure}{.47\textwidth}
			\centering
			\includegraphics[width=7.3cm]{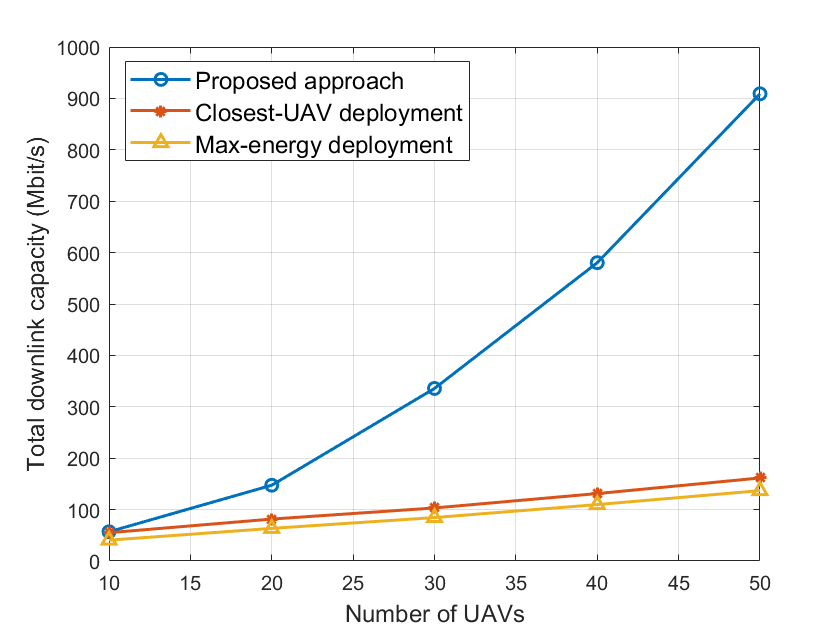}\vspace{-0.2 cm}
			\caption{\label{capacity}{ Total downlink capacity of the employed UAVs }}   
		\end{subfigure}
		\begin{subfigure}{.47\textwidth}
			\centering
			\includegraphics[width=7.3cm]{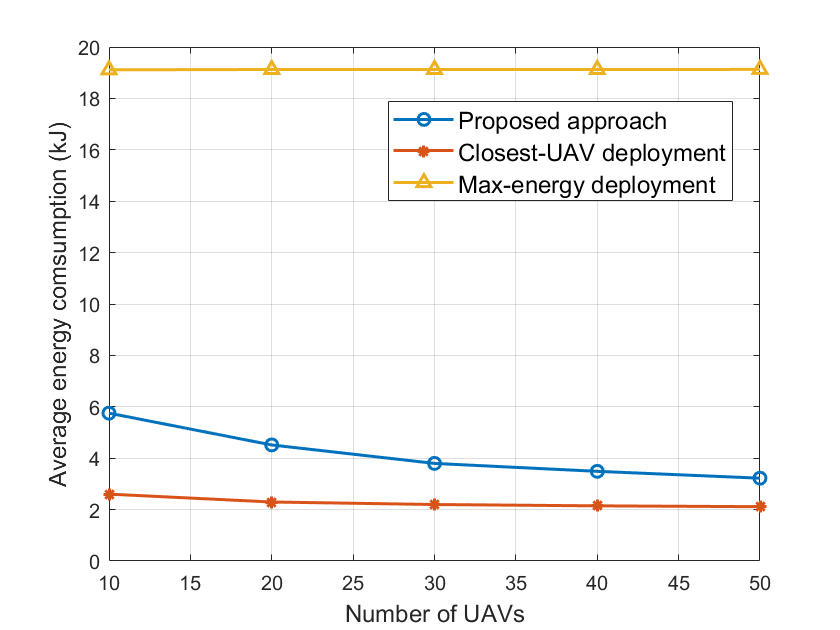}\vspace{-0.2 cm}
			\caption{\label{energy}{ Average energy consumption  per UAV }}  
		\end{subfigure}
		\begin{subfigure}{.47\textwidth}
			\centering
			\includegraphics[width=7.3cm]{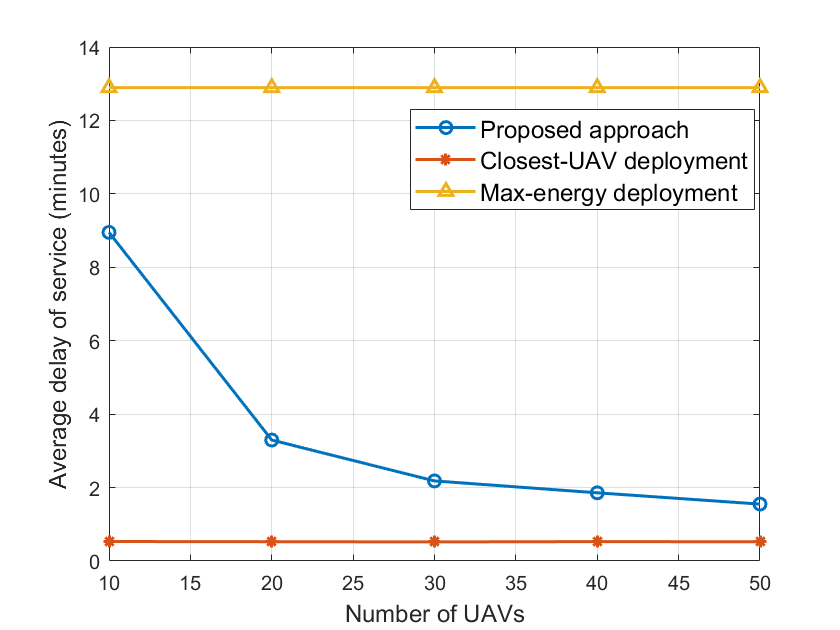}\vspace{-0.2 cm}
			\caption{\label{delay}{  Average service delay per UAV  }}    
		\end{subfigure}
		\vspace{-0.2cm}
		\caption{\small{ \label{metrics}  { Total downlink capacity, average energy consumption and average service delay of the UAV downlink service for the proposed predicted UAV deployment and two baselines. } } 
		}
	\end{center}
	\vspace{-1.2cm}
\end{figure*}

In Fig. \ref{metrics}, we compare the performance of the proposed predictive UAV deployment with two baselines, in terms of the total  downlink capacity, average energy consumption, and average service delay of the employed UAVs. 
First,  in Fig. \ref{capacity}, as the number of UAVs within the cellular network increases, the total downlink capacity that the employed UAVs provide to the downlink UEs  increases in all three schemes.  
For a larger number of UAVs, the average movement distance between each overloaded BS and  its employed UAV will decrease. 
Therefore, less energy is consumed during the mobility stage, and more power can be reserved for the downlink transmission service. 
In consequence, the downlink capacity of all three methods increases. 
However, in the closest-UAV approach,  without the data demand prediction, the deployed closest UAV may not have enough on-board energy to satisfy the downlink data demand. Therefore,  the  downlink capacity in the closest-UAV baseline is lower than the proposed approach. 
In the max-energy deployment, the distance between the employed UAV and the service area is usually larger, compared to two other methods.  Although the employed UAV has the largest amount of available onboard energy, due to a longer travel distance, most of the onboard energy will be consumed on  mobility, and the transmit power may be insufficient. 
Thus,   the max-energy deployment yields the lowest capacity performance among all three schemes.  
Moreover, the proposed approach improves the downlink capacity by over four-fold and five-fold, compared to the closest-UAV and the max-energy baselines, respectively.

Fig. \ref{energy} and Fig. \ref{delay} show the average energy consumption  and  service delay of each employed UAV, respectively, as the number of available UAVs in the network increases. 
First, we can see that the closest-UAV scheme yields the least energy cost and service delay, due to its shortest movement distance. 
In the proposed approach, the energy consumption and movement duration are relatively higher, because the selection criteria balances between the distance of the UAV (which determines the movement energy) and the availability of sufficient on-board energy to meet the predicted data demand. 
Meanwhile, the max-energy deployment results in the highest energy and time cost, due to the largest travel distance during the mobility stage.  
Next, for a higher number of UAVs, the energy consumption and service delay of the proposed method both drop, while the performance of the baselines remains nearly constant. 
In particular, as the number of UAVs increases, the performance of the proposed approach improves exponentially, and the gap between the proposed approach and the closest-UAV scheme becomes much smaller.  
In the proposed method, having more UAVs reduces the average distance between any employed UAV and its service point, and, hence, decreases the energy and time cost. 
However, in the two baselines, the number of available UAVs does not effect the travel distance during mobility. Thus, the energy consumption and service delay of two baselines remain nearly constant with the increase in the number of UAVs.

In Fig. \ref{utilities}, we compared the utilities of the BS and UAV operators in three schemes.  
First, in Fig. \ref{utilitybs}, for a larger number of UAVs, the average utility per BS increases in all three schemes,  and the proposed approach yields the highest utility.   
In the proposed method, by having more UAVs, the average distance between an employed UAV to its service becomes smaller, and, thus, the type $\theta_{ij}$ of the employed UAV $j$ with respect to the requesting BS $i$ decreases, which yields a higher utility of  BS $i$. 
For the closest-UAV and max-energy schemes, since the employed UAV cannot always satisfy the data demand of its downlink UEs, the utilities of each BS for both baselines are lower, compared the proposed method. 

In Fig. \ref{utilityuav}, we can see that, as the number of UAVs increases,  the total utility of the employed UAVs becomes higher in the proposed approach, while the UAVs' utilities resulting from both baseline schemes are much lower than the proposed method.   
As shown in Figs. \ref{capacity} and \ref{energy}, by having more UAVs, the average energy cost per UAV resulting from the proposed approach will decrease, while the downlink transmission capacity of the UAV networks increases, which yields a higher income.   
As a result, the overall utility of the UAV operator in the proposed method will become higher for a larger  number of  UAVs.    
For the closest-UAV scheme,  its lower  energy consumption and shorter service delay yield  a smaller deployment cost, compared with the proposed method. 
However, the lower downlink capacity results in less payment from the BS. Thus, the total utility of the UAV operator in  the closest-UAV scheme is less than the proposed method. 
Moreover,  based on Figs. \ref{capacity}, \ref{energy}, and \ref{delay}, we can see that the max-energy scheme yields the lowest transmission rate, the highest energy cost, and the longest service delay. Therefore, the utility of the UAV operators in  the max-energy scheme is the lowest among all three methods. 
In consequence, based on Fig. 4 and Fig. 5, we can conclude that  the proposed method enables an efficient UAV deployment to alleviate communication congestion in the cellular networks, and shows a significant advantage on the economical revenues of both the BS and UAV operators, compared with two baseline, event-driven approaches.

\begin{figure*}[!t]
	\begin{center}
		\vspace{-1cm}
		\begin{subfigure}{.49\textwidth}
			\centering
			\includegraphics[width=8cm]{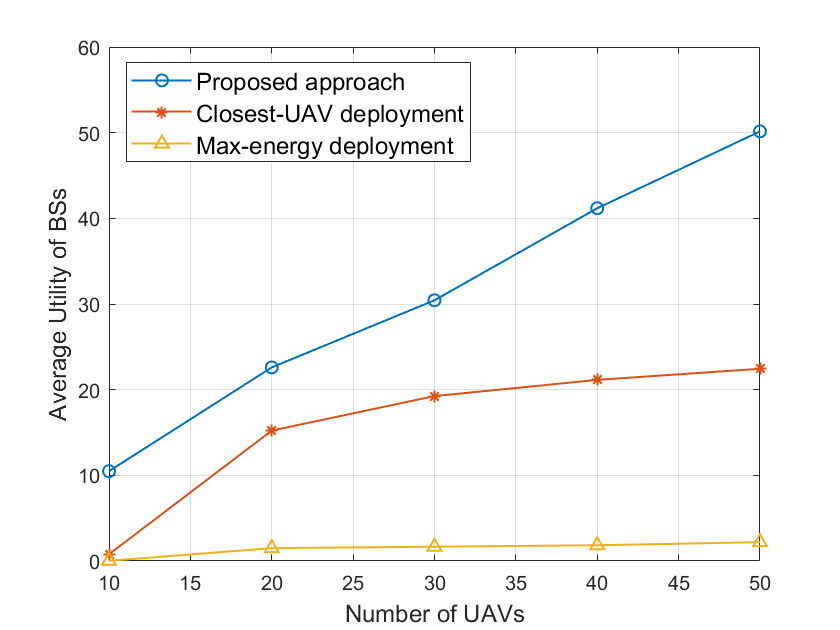}\vspace{0 cm}
			\caption{\label{utilitybs}   Average utility of BSs}  
		\end{subfigure}
		\begin{subfigure}{.49\textwidth}
			\centering
			\includegraphics[width=8cm]{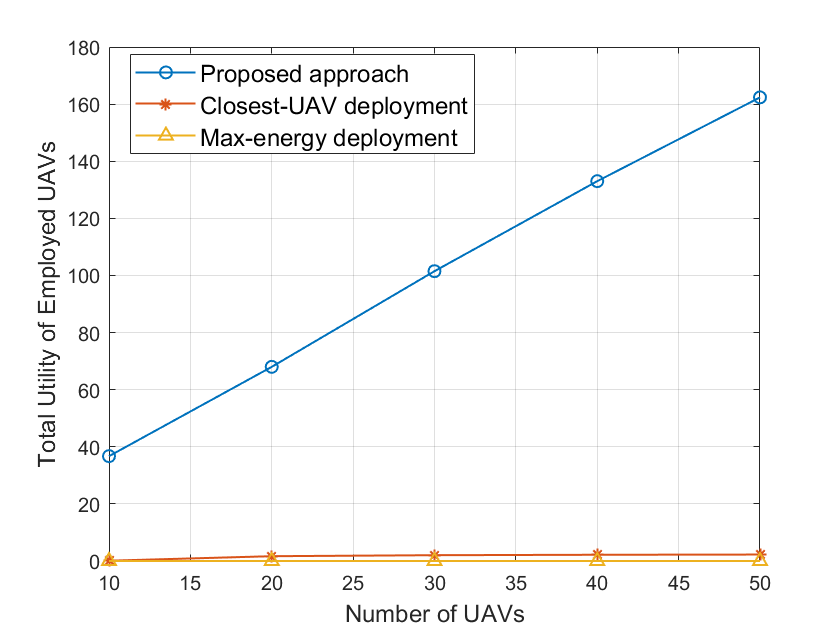}\vspace{0 cm}
			\caption{\small\label{utilityuav}{   Total utility of the UAV operators} } 
		\end{subfigure}
		\vspace{-0.3 cm} 
		\caption{\small{\label{utilities}{   Utility of the BS and UAV operators for the proposed predictive UAV deployment and two baselines. } 
			} 
		}
	\end{center}
	\vspace{-1.2 cm}
\end{figure*}

\section{Conclusion}\label{conclusion} 
In this paper, we have proposed a novel approach for predictive deployment of UAVs to complement the ground cellular system in face of the hotspot events. 
In particular,  four inter-related and sequential stages have been proposed to enable the ground BS to optimally employ a UAV to offload the excess traffic. 
First,  a novel framework, based on the EM and WEM methods, has been proposed to estimate the UE distribution and the downlink traffic demand.
Next, to guarantee a truthful information exchange between the BS and UAV operators, a traffic offload contract have been developed, and the sufficient and necessary conditions for having a feasible contract have been analytically derived. 
Then, an optimization problem have been formulated to deploy the optimal UAV onto the hotspot area in a way that the utility of each overloaded ground BS is maximized.     
Simulation results show that the proposed WEM approach yields a prediction error around $10\%$, and compared with the EM and $k$-mean schemes, the WEM algorithm yields a higher prediction accuracy, particularly when the traffic load in the cellular system becomes spatially uneven.    
Furthermore, compared with two event-driven schemes based on the closest-distance and maximal-energy metrics, the proposed predictive deployment approach enables UAV operators to provide efficient downlink service for hotspot users, and significantly improves the revenues of both the BS and UAV networks.

\appendices
\section{Proof of Proposition 1}\label{AppPro1}
We first use contradiction to prove the proposition that if $\theta_{ij} > \theta_{ij}^{'}$, then $u_i(\theta_{ij}) \ge u_i(\theta_{ij}^{'})$. 
Suppose that there exists $u_i(\theta_{ij})<u_i(\theta_{ij}^{'})$, but $\theta_{ij}>\theta_{ij}^{'}$. Then, we have
	\begin{align}\label{AppPro2Ine0}
	\theta_{ij}u_i(\theta_{ij}^{'}) + \theta_{ij}^{'} u_i(\theta_{ij}) > \theta_{ij}u_i(\theta_{ij}) + \theta_{ij}^{'} u_i(\theta_{ij}^{'}).
	\end{align}
	On the other hand, from  IC condition, we have
	\begin{align}\label{AppPro2Ine1}
	\theta_{ij}u_i(\theta_{ij}) - p_{j}(\theta_{ij}) \ge \theta_{ij}u_i(\theta_{ij}^{'}) - p_{j}(\theta_{ij}^{'}),~~~~ \text{   } ~~~~\theta_{ij}^{'}u_i(\theta_{ij}^{'}) - p_{j}(\theta_{ij}^{'}) \ge \theta_{ij}^{'} u_i(\theta_{ij}) - p_{j}(\theta_{ij}). 
	\end{align} 
	By adding the inequations in (\ref{AppPro2Ine1}), we have $\theta_{ij}u_i(\theta_{ij})  + \theta_{ij}^{'}u_i(\theta_{ij}^{'}) \ge \theta_{ij}u_i(\theta_{ij}^{'}) + \theta_{ij}^{'} u_i(\theta_{ij})$, 
	which contradicts to (\ref{AppPro2Ine0}). This completes the first part of the proof. 
	
	Next, we prove that if $u_i(\theta_{ij}) \ge u_i(\theta_{ij}^{'})$,  $p_{j}(\theta_{ij}) \ge p_{j}(\theta_{ij}^{'})$. From the IC condition, we have
	$\theta_{ij}^{'} u_i(\theta_{ij}^{'}) - p_{j}(\theta_{ij}^{'}) \ge \theta_{ij}^{'} u_i(\theta_{ij}) -p_{j}(\theta_{ij})$, 
	i.e. $p_{j}(\theta_{ij}) - p_{j}(\theta_{ij}^{'}) \ge \theta_{ij}^{'} \left( u_i(\theta_{ij}) - u_i(\theta_{ij}^{'}) \right)$. 
	Since $u_i(\theta_{ij}) > u_i(\theta_{ij}^{'})$, we conclude 
	$p_{j}(\theta_{ij}) - p_{j}(\theta_{ij}^{'}) \ge \theta_{ij}^{'} \left( u_i(\theta_{ij}) - u_i(\theta_{ij}^{'}) \right) \ge 0$, 
	and thus $p_{j}(\theta_{ij}) \ge p_{j}(\theta_{ij}^{'})$. This completes the proof. 
 
\section{Proof of Theorem \ref{contracttheorem}} \label{AppTheo1}
For notation  simplicity, in this section, we 
denote  $u_i$, $p_j$, $\theta_{ij}$,  $M_{ij}$ 
as  $u$, $P$, $\theta$, $M$ respectively.  
\subsection{Proof for necessary conditions} 
Given the IR and IC conditions, we prove Theorem \ref{contracttheorem} in this section. 
First, as shown in Proposition \ref{pro}, for any $\theta, \theta^{'} \in \Theta_i$, once $\theta > \theta^{'}$, then $u(\theta) \ge u(\theta^{'})$ and $P(\theta) \ge P(\theta^{'})$. Therefore, condition (a) of Theorem \ref{contracttheorem} is proved by Proposition 1.  
Second,  condition (b) of Theorem \ref{contracttheorem} is supported by the IR condition, where $R_{ij}(\theta) \ge 0$ for all $\theta$ in $\Theta_i$, which naturally includes $\theta^{\text{min}}$.  
Next, we  prove condition (c).  Let $\Delta \theta = \theta^{'} - \theta $. 
According to the IC condition, for any $\Delta \theta \in [\theta^{\text{min}} - \theta^{\text{max}},0) \cup (0, \theta^{\text{max}} - \theta^{\text{min}}  ]$, we have: 
$\theta \cdot u(\theta) - P(\theta)  \ge \theta \cdot u(\theta + \Delta \theta) - P(\theta + \Delta \theta)$,  
i.e., $\theta \cdot [u(\theta) - u(\theta + \Delta \theta)]  \ge   P(\theta) - P(\theta + \Delta \theta)$. 
If $\Delta \theta >0$, then according to Proposition 1, $ u(\theta + \Delta \theta) \ge u(\theta) $ and $ P(\theta + \Delta \theta) \ge P(\theta)$. 
Here, we exclude the situation where $ u(\theta + \Delta \theta) = u(\theta) $ and $ P(\theta + \Delta \theta) = P(\theta)$ in the following discussion of this proof, because condition (c) naturally holds in this case.  
Therefore, for any $\Delta \theta \in  (0, \theta^{\text{max}} - \theta^{\text{min}}]$, we have  
\begin{align}\label{you}
	\theta \le \frac{P(\theta + \Delta \theta) -  P(\theta)}{u(\theta + \Delta \theta) - u(\theta)}.
\end{align}
If $\Delta \theta <0$,  then $ u(\theta + \Delta \theta) < u(\theta) $ and $ P(\theta + \Delta \theta) < P(\theta)$. Thus, for any $\Delta \theta \in [\theta^{\text{min}} - \theta^{\text{max}},0)$, 
\begin{align}\label{zuo}
\theta \ge \frac{P(\theta + \Delta \theta) -  P(\theta)}{u(\theta + \Delta \theta) - u(\theta)}.
\end{align} 
Combing (\ref{you}) and (\ref{zuo}), we have 
 $\frac{\dif P }{ \dif \theta} / \frac{\dif u }{\dif \theta }  =   	\lim_{\Delta \theta \to 0} \frac{P(\theta + \Delta \theta) - P(\theta)}{ u(\theta + \Delta \theta) - u(\theta) } = \theta$,  
which proves condition (c) of Theorem \ref{contracttheorem}.  

\subsection{Proof for sufficient conditions} 
From Theorem \ref{contracttheorem}, we will prove the IR and IC conditions in this section.  
First, we prove the IR condition. According to condition (b) of Theorem \ref{contracttheorem}, $\theta^{\text{min}}$ satisfies the IR condition. Then, we prove that for any $\theta \in (\theta^{\text{min}}, \theta^{\text{max}}]$, the IR condition holds. 
From condition (c) of Theorem \ref{contracttheorem}, we have the following inequalities, $\frac{P(\theta) -P(\theta^{\text{min}})}{u(\theta) -u(\theta^{\text{min}})}  \le \theta$,  i.e.,
\begin{align}\label{pmin1}
	P(\theta^{\text{min}}) \ge P(\theta) - \theta \cdot [ u(\theta) - u(\theta^{\text{min}}) ]. 
\end{align}
From condition (b), we have
\begin{align}\label{pmin2}
	\theta^{\text{min}} \cdot u(\theta^{\text{min}}) - P(\theta^{\text{min}}) - M \ge 0.
\end{align}
By combing (\ref{pmin1}) and (\ref{pmin2}), we have $\theta \cdot u(\theta) - P(\theta) - M\ge (\theta - \theta^{\text{min}}) \cdot u(\theta^{\text{min}}) \ge 0$. 
Thus, for any $\theta \in \Theta_i$, the IR condition holds. 

In the end, we prove the IC condition.  
Let $h = \theta \cdot u(\theta) - P(\theta) - M - [ \theta \cdot u(\theta^{'}) - P(\theta^{'}) - M ]$. And we prove that $h \ge 0$. 
From condition (c), we have, if $\theta^{'} > \theta$, then
$     \frac{P(\theta^{'}) -P(\theta )}{u(\theta^{'}) -u(\theta)} \ge \min \{ \theta, \theta^{'} \} = \theta$. 
i.e., $	P(\theta^{'}) -P(\theta ) \ge \theta \cdot [u(\theta^{'}) -u(\theta)]$. 
Therefore, $h = \theta \cdot [u(\theta) -  u(\theta^{'})] +   P(\theta^{'}) - P(\theta) \ge 0$. 
On the other hand, if  $\theta^{'} < \theta$, then
$\frac{P(\theta) -P(\theta^{'} )}{u(\theta) -u(\theta^{'})} \le \max \{ \theta, \theta^{'} \} = \theta$. 
i.e., $P(\theta) -P(\theta^{'} ) \le \theta \cdot [u(\theta) -u(\theta^{'})]$. 
Therefore, $h \ge 0$. Consequently, the IC condition holds.  
\ifCLASSOPTIONcaptionsoff
  \newpage
\fi

\bibliographystyle{IEEEtran}
\bibliography{references}

\end{document}